\newcommand{\Rset}{\ensuremath{\mathbb{R}}}
\newcommand{\Rsetnn}{\ensuremath{\mathbb{R}_{\geq 0}}}
\newcommand{\sign}{\operatorname{sign}}
\newcommand{\rank}{\operatorname{rank}}
\newcommand{\Img}{\operatorname{Img}}
\newcommand{\sigGI}{\omega}
\renewcommand{\vec}{\mathbf}
\theoremstyle{plain}
\newtheorem{theorem}{Theorem}[section]
\newtheorem{lemma}[theorem]{Lemma}
\newtheorem{proposition}[theorem]{Proposition}
\newtheorem{corollary}[theorem]{Corollary}
\theoremstyle{definition}
\newtheorem{definition}[theorem]{Definition}
\newtheorem{example}[theorem]{Example}
\begin{document}
\title{A unified view on bipartite species-reaction and interaction graphs for chemical reaction networks}

\author{Hans-Michael Kaltenbach\\
              Dep. Biosystems Science and Engineering, ETH Zurich\\
              \url{hans-michael.kaltenbach@bsse.ethz.ch}}

\maketitle

\begin{abstract}
The Jacobian matrix of a dynamic system and its principal minors play a prominent role in the study of qualitative dynamics and bifurcation analysis. When interpreting the Jacobian as an adjacency matrix of an interaction graph, its principal minors correspond to sets of disjoint cycles in this graph and conditions for various dynamic behaviors can be inferred from its cycle structure. For chemical reaction systems, more fine-grained analyses are possible by studying a bipartite species-reaction graph. Several results on injectivity, multistationarity, and bifurcations of a chemical reaction system have been derived by using various definitions of such bipartite graph. Here, we present a new definition of the species-reaction graph that more directly connects the cycle structure with determinant expansion terms, principal minors, and the coefficients of the characteristic polynomial and encompasses previous graph constructions as special cases. This graph has a direct relation to the interaction graph, and properties of cycles and sub-graphs can be translated in both directions. A simple equivalence relation enables to decompose determinant expansions more directly and allows simpler and more direct proofs of previous results.
\end{abstract}

\section{Introduction}
A major problem in the analysis of chemical reaction systems is the fact that parameters such as kinetic rate constants for are inherently difficult to obtain from experimental data, and \textit{in-vitro} data might not be valid for \textit{in-vivo} systems. Moreover, the exact algebraic form of chemical rate laws is often difficult to determine. In contrast to more general dynamic systems, however, the dynamics of a chemical reaction system is additionally restricted by the stoichiometry and topology of the underlying reaction network, and chemical rate laws are typically monotone in the concentrations. The underlying reaction network also induces algebraic dependencies of state variables and parameters. Several approaches have been proposed that exploit these additional constraints and allow to determine ---without knowledge of parameter values and minimal conditions on the rate laws--- if a system is capable of certain qualitative dynamics such as oscillations and multiple equilibria, and to establish stability of equilibria. These methods fall broadly into three categories: 

(i) methods based on the particular algebraic structure of reaction systems include the deficiency-based Chemical Reaction Network Theory~\cite{horn1972,feinberg1995,feinberg1995a}, the Stoichiometric Network Analysis~\cite{clarke1980}, and Biochemical Systems Theory~\cite{savageau1969,savageau1969a,savageau1970}.

(ii) methods based on the cycles of a signed interaction graph derived from the Jacobian: e.g., a positive cycle is necessary for multistationarity, whereas a negative cycle is necessary for oscillations~\cite{gouze1998,thomas2001,soule2003,kaufman2007}. Other criteria have also been investigated~\cite{domijan2011}. The absence of negative undirected cycles implies monotonicity with respect to an orthant cone, ruling out chaotic and oscillatory dynamcis~\cite{sontag2007,smith1988,hirsch2005}; 

(iii) methods based on a bipartite graph with vertices for both species and reactions been proposed recently to study criteria for the existence multiple equilibria or oscillations~\cite{banaji2008,mincheva2007,craciun2005,craciun2006a} and for establishing monotonicity~\cite{sontag2007,angeli2010}. Exploiting the relation of principal minors of a matrix and cycles in its associated graph to coefficients in the characteristic polynomial, several criteria were also formulated to infer the possibility of saddle-point and Hopf bifurcations in a system~\cite{mincheva2007}. Definitions of such graphs in the context of qualitative dynamics include undirected~\cite{craciun2006a,banaji2008} and directed~\cite{mincheva2007} species-reaction (SR) graphs as well as graphs with special edge-types~\cite{banaji2008}. For various proofs, oriented versions of an undirected SR-graph also need to be considered in~\cite{craciun2006a}. 

In this work, our goal is to elucidate how the various interaction- and bipartite-graph-based methods are related and to propose a new version of the species-reaction graph, named directed species-reaction graph (DSR-graph), that immediately shows its relation to the Jacobian matrix and gives more direct justification and proofs of several theorems. Using this graph, we present refined criteria for sign-definiteness of the determinant of the Jacobian and its principal minors and answer a question raised in~\cite{mincheva2007} of the equivalence of criteria developed independently in~\cite{mincheva2007} and~\cite{craciun2006a}. We propose a simple equivalence relation on the DSR-graph that allows to study determinants by studying each of its equivalence classes. We also develop and emphasize the relation between determinant expansions, the Jacobian matrix, and the DSR-graph, making use of long-known relations~\cite{harary1962}. 

\section{Chemical Reaction Systems}
We consider a biochemical reaction network with $n$ \emph{species} $S_1,\dots, S_n$ and $r$ \emph{reactions} $R_1,\dots,R_r$. A reaction $R_j$ is formally given by
\[
  R_j: \sum_{i=1}^n y_{i,j} S_i \to \sum_{i=1}^n y'_{i,j} S_i\;,
\]  
where $y_{i,j},y'_{i,j}\in\Rsetnn$ are the \emph{molecularities} of the substrates and products, respectively. The matrix $\vec{N}\in\Rset^{n\times r}$ with entries $N_{i,j}=y'_{i,j}-y_{i,j}$ is called the \emph{stoichiometric matrix}, its $j$th column vector $(y'_{1,j}-y_{1,j},\dots,y'_{n,j}-y_{n,j})^T\in\Rset^n$ is the \emph{stoichiometry} of reaction $R_j$. We require that all reactions are irreversible; this can always be achieved by  splitting reversible reactions into a forward- and a backward reaction. We also require that each species occurs on at most one side in each reaction, i.e., we exclude reactions such as $A+B\to 2A$. The signs of all $N_{i,j}$ are then uniquely defined such that $N_{i,j}>0$ (resp. $N_{i,j}<0$) if species $S_i$ is produced (resp.\ consumed) by reaction $R_j$, and $N_{i,j}=0$ if it does not participate in the reaction.

We denote by $x_i(t)\in\Rsetnn$ the concentration of species $S_i$ at time $t$, and collect these concentrations in the \emph{state vector} $\vec{x}\equiv\vec{x}(t)=(x_1(t),\dots,x_n(t))^T$, adopting the usual convention to drop the explicit dependence on $t$. Each reaction $R_j$ is assigned a \emph{rate law}
\[
  v_j :  \Rsetnn^n  \to  \Rsetnn\;,
\]
describing the velocity of the reaction as a function of the current state $\vec{x}$ of the system. A typical rate law is the mass-action law $v_j(\vec{x}) = k_j\cdot x_1^{y_{1,j}}\cdots x_n^{y_{n,j}}$ with some reaction-specific rate constant $k_j>0$. The rate laws are collected into the \emph{rate vector} $\vec{v}(\vec{x})=(v_1(\vec{x}),\dots,v_r(\vec{x}))^T$, leading to the system of non-linear differential equations
\begin{equation}
  \label{eq:xNv}
  \frac{d}{d t}\vec{x} = \vec{f}(\vec{x}) = \vec{N}\cdot\vec{v}(\vec{x})\;,
\end{equation}
which govern the evolution of species-concentrations over time.

A special class of rate laws is given by the \emph{non-autocatalytic (NAC)} laws, for which $\partial v_j/\partial x_i > 0$ when specie $S_{i}$ is a substrate of reaction $R_{j}$ and all concentrations are positive, and $\partial v_j/\partial x_i \equiv 0$ if $S_{i}$ is not a substrate of $R_{j}$. These conditions are quite natural: they require that an increase in a reactant's concentration cannot decrease the reaction rate and that species that are not reactants of a reaction do not directly influence its rate. The class NAC contains both mass-action and Michaelis-Menten rate laws, as well as many others. Very similar conditions for reversible reactions were given in~\cite{banaji2008}.

To model mass transport over the boundaries of the system, we also allow reactions of the form $R_j: \emptyset\to S_i$ with constant rate $v_j(\vec{x})=k_j>0$ to represent \emph{production} or \emph{inflow} of $S_i$, and of the form $R_j: S_i\to \emptyset$ with rate $v_j(\vec{x})=k_j\cdot x_i$ to represent \emph{degradation} or \emph{outflow} of $S_i$.

A concentration $\vec{x}^*\in\Rsetnn^n$ is called an \emph{equilibrium} or a \emph{steady state} if
\[
  \vec{f}(\vec{x}^*) = \vec{0}\;;
\]
it is called \emph{positive}, if $x_i^*>0$ for all $i=1,\dots,n$, which we abbreviate $\vec{x}>\vec{0}$. With initial condition $\vec{x}_0=\vec{x}(0)$, all solutions of system~(\ref{eq:xNv}) are confined to the affine subspace $\vec{x}_0+\Img\vec{N}$ of dimension $\rank(\vec{N})$. A system is said to have \emph{multiple positive equilibria} or to be \emph{multistationary}, if there are $\vec{x}^*\not=\vec{x}^{**}>\vec{0}$ such that
\begin{align*}
  \vec{f}(\vec{x}^*) &= \vec{0} \\
  \vec{f}(\vec{x}^{**}) &= \vec{0} \\
  \vec{x}^{**}-\vec{x}^* &\in \Img\vec{N}\;.
\end{align*}
Without loss of generality, we may assume that the system has been reduced such that $\vec{N}$ has full rank; then, $\Img\vec{N}$ coincides with the state space and the third condition for multistationarity is trivially fulfilled.

\begin{example}
\label{ex:small}
We consider the following reaction network
\begin{eqnarray*}
  R_1: &aA \to& bB \\
  R_2: &cB \to& dA \\
  R_3: &A \to& \emptyset\\
  R_4: &\emptyset \to& A\\
  R_5: &B \to& \emptyset\\
  R_6: &\emptyset \to& B\;.
\end{eqnarray*}  
This network has $n=2$ species $A,B$ with concentrations $\vec{x}=(x_A,x_B)^T$ and $r=6$ reactions $R_1,\dots,R_6$, where reactions $R_3$ and $R_5$ describe the degradation or outflow of species $A,B$, respectively, and reactions $R_4$ and $R_6$ their production or inflow into the system. Reactions $R_1$ and $R_2$ are the \emph{internal} or \emph{true} reactions. The stoichiometry of reaction $R_1$ is determined by $y_{1,1}=a, y_{2,1}=0$, $y'_{1,1}=0, y'_{2,1}=b$ and is thus $(-a, b)^T$. The stoichiometric matrix is
\[
  \vec{N} = \begin{pmatrix}
     -a & d & -1 & 1 & 0 & 0 \\
     b & -c & 0 & 0 & -1 & 1  
  \end{pmatrix}
\]
with $\rank(\vec{N})=2$. The dynamics is given by the two differential equations
\begin{eqnarray*}
  (d/dt) x_A &=& -a\, v_1(\vec{x}) + d\, v_2(\vec{x}) - v_3(\vec{x}) + v_4(\vec{x}) \\
  (d/dt) x_B &=& -c\, v_2(\vec{x}) + b\, v_1(\vec{x}) - v_5(\vec{x}) +  v_6(\vec{x})\;.
\end{eqnarray*}
The production rates $v_4(\vec{x})\equiv k_4$ and $v_6(\vec{x})\equiv k_6$ are constant.
\end{example}

\section{Qualitative Dynamics and the Interaction Graph and }
Several criteria for qualitative dynamics such as oscillations or multistationarity can be established via properties of the Jacobian matrix of a reaction system. The Jacobian can be interpreted in terms of a graph and its principal minors then correspond to certain cycle structures in this graph. Of particular interest are conditions such that these cycles are sign-definite, that is, do not depend on the concentrations $\vec{x}$ at which the Jacobian is evaluated.

The \emph{Jacobian matrix} of a reaction system (\ref{eq:xNv}) is the function
\[
  \vec{J} = \left(\frac{\partial f_i}{\partial x_j}\right)_{1\leq i,j\leq n} = \vec{N}\cdot\left(\frac{\partial v_i}{\partial x_j}\right)_{1\leq i\leq r, 1\leq j\leq n}\;.
\]
Evaluated at $\vec{x}_0$, the resulting matrix $\vec{J}(\vec{x}_0)$ allows to approximate the system's dynamics in the vicinity of $\vec{x}_0$.

A tupel $\alpha=(\alpha_1,\dots,\alpha_l)\subseteq\{1,\dots,n\}$ with $\alpha_1<\cdots<\alpha_l$ is called a \emph{multi-index} of $\{1,\dots,n\}$ of size $|\alpha|:=l$. Let $\vec{A}\in\Rset^{n\times m}$ and let $\alpha$ and $\beta$ be two multi-indices, not necessarily of the same size, of $\{1,\dots,n\}$ and $\{1,\dots,m\}$, respectively. The matrix
\[
  \vec{A}_{\alpha,\beta} := (a_{i,j})_{i\in\alpha,\, j\in\beta}
\]
is the $|\alpha|\times|\beta|$ sub-matrix of $\vec{A}$ derived by removing all rows and columns with indices not in $\alpha$ and $\beta$, respectively. We denote $\vec{A}_\alpha:=\vec{A}_{\alpha,\alpha}$ for brevity. The determinant $\det(\vec{A}_\alpha)$ is called a \emph{principal minor} of order $|\alpha|$.

The \emph{characteristic polynomial} of $\vec{J}$ is given by
\[
  P_{\vec{J}}(\lambda) = \det(\lambda \vec{I} - \vec{J})
  = \lambda^n+c_{n-1}\lambda^{n-1}+\cdots+c_1\lambda+c_0\;.
\]  
Its coefficients $c_i$ can be computed by summing over all principal minors of $\vec{J}$ of order $n-i$:
\begin{equation}
\label{eq:charpolcoeff}
  c_i = (-1)^{n-i}\sum_{\alpha\subseteq\{1,\dots,n\} \atop |\alpha|=n-i} \det(\vec{J}_\alpha)
      = \sum_{\alpha\subseteq\{1,\dots,n\} \atop |\alpha|=n-i} \det(-\vec{J}_\alpha)\;,
\end{equation}
where we set $c_n=1$ for completeness. Two well-known special cases are $c_0=\det(-\vec{J})$ and $c_{n-1}=\text{tr}(-\vec{J})$.

The roots of $P_{\vec{J}}$ are the eigenvalues of the system. A \emph{saddle-node bifurcation}, at which an equilibrium point splits into two, requires a single zero eigenvalue. The condition $c_0=0$ and therefore the vanishing of the determinant in at least one point is a necessary condition for this bifurcation. A \emph{Hopf-Andronov bifurcation}, at which an equilibrium changes into a limit cycle, requires a single pair of conjugate eigenvalues with zero real part. Under certain conditions, the vanishing of coefficients $c_i=0$ for $i\not= 0$ implies that the Hurwitz determinant of order $(n-1)$ vanishes, which in turn is a necessary condition for a conjugate pair of imaginary eigenvalues~\cite{mincheva2007}. These observations give a direct connection of conditions for bifurcations and (non-)vainishing of principal minors via equations~(\ref{eq:charpolcoeff}). For fully open systems with inflow rates $\vec{K}$ and internal and outflow reactions given by $\vec{N}$, the condition that two positive equilibria match the same inflow rates is given by
\[
  \vec{K} = -\vec{N}\vec{v}(\vec{x^*}) = -\vec{N}\vec{v}(\vec{x^{**}})\;;
\]
multistationarity can be excluded if $-\vec{N}\vec{v}(\vec{x})$ is \emph{injective}. For mass-action kinetics, this requires that $\det(-\vec{J})>0$ for all $\vec{x}>\vec{0}$~\cite{craciun2006}, for NAC kinetics, the same has to hold for all its principal minors~\cite{banaji2010}

Before introducing the interaction graph itself, we briefly recall some standard definitions for arbitrary directed graphs $G=(V,E,\gamma)$ with vertex-set $V$, edge-set $E$, and edge-label function $\gamma:E\to\Rset$: an edge $e=(u,\,u')\in E$ starting in $u$ and ending in $u'$ is said to be \emph{incident} to either vertex and conversely either vertex is incident to $e$.

A \emph{path of length $q$} or \emph{$q$-path} is a sequence of edges $((u_1,\,u_2),\,(u_2,\,u_3),\dots,(u_{q-1},\,u_q))$ such that $(u_i,\,u_{i+1})\in E$ for $1\leq i<q$. We will alternatively also denote it by its sequence of vertices $(u_1,\dots,u_q)$. It is a \emph{simple path} if $u_i\not=u_j$ for $i\not=j$. It is a \emph{cycle} $C$ of length $q$, if $u_1=u_q$ and a \emph{simple cycle} or \emph{circuit} if it is a cycle and a simple path. A cycle $(u,\,u)\in E$ is a \emph{self-loop} (of length $q=1$).  

A \emph{sub-graph} $H=(V(H),\, E(H))$ of $G$, denoted $H\subseteq G$, is a graph with vertices $V(H)\subseteq V$, edges $E(H)\subseteq (V(H)\times V(H))\cap E$ and edge-label function $\gamma|_{E(H)}$. We define the label of the whole sub-graph $H$ to be
\[
  \gamma(H):=\prod_{e\in E(H)} \gamma(e) \;.
\]
A sub-graph $H$ is \emph{positive} (\emph{negative}) if $\gamma(H)>0$ ($\gamma(H)<0$). Two sub-graphs $H$ and $H'$ are \emph{disjoint} if their vertex and edge sets are. The union, intersection, and difference of (sub-)graphs $H,H'$ are defined by the corresponding operations on vertex and edge-sets and restriction of $\gamma$ to the resulting sets, e.g., $H\backslash H'$ is the graph given by
\begin{eqnarray*}
   V(H\backslash H') &=& V(H)\backslash V(H') \\
   E(H\backslash H') &=& \left(E(H)\backslash E(H')\right)
     \;\cap\;\left(V(H\backslash H')\times V(H\backslash H')\right)
\end{eqnarray*}
and edge-label function $\gamma|_{E(H\backslash H')}$. Moreover, $H+H'$ denotes the direct sum, i.e., the union of disjoint sub-graphs.

The interaction graph is constructed by interpreting the Jacobian matrix as the (labelled) adjacency matrix of a directed graph.
\begin{definition}[interaction graph $G_I$]
Let $\vec{J}$ be the Jacobian matrix of a chemical reaction system~(\ref{eq:xNv}) with species $S_1,\dots,S_n$. The \emph{interaction graph} $G_I=G_I(\vec{J})=(V,E,\gamma)$ is the directed edge-labelled graph given by
\begin{align*}
  V &:= \{S_1,\dots,S_n\} \\
  E &:= \{(S_i,\,S_j)\in V\times V \,|\, J_{ji}\not\equiv 0\} \\
  \gamma(e) &:= J_{ji} \text{ for } e=(S_i,\,S_j)\in E\;.
\end{align*}
\end{definition}
Note that the topology of $G_I$ is independent of the species-concentrations $\vec{x}$, but that the edge-labels are functions in $\vec{x}$.

\begin{definition}[sign-definite sub-graph]
A sub-graph $H\subseteq G_I$ is \emph{sign-definite} if
\[
  \sign(\gamma(H)(\vec{x}_0)))
\]
is independent of $\vec{x}_0$.
\end{definition} 
A sub-graph is sign-definite if each of its edges is; the converse is not necessarily true. Under our assumptions on the network, sign-definiteness of an edge in $G_I$ can be established from the stoichiometry alone.
\begin{lemma}[sign-definite edges]
\label{lem:sign-definite}
Let $G_I$ be an interaction graph. An edge $(S,S')\in E(G_I)$ is sign-definite if and only if there are no two reactions $R,R'$ such that $S$ and $S'$ are reactants in $R$ and reactant and product in $R'$, respectively.
\end{lemma}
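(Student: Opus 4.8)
The plan is to expand the edge label directly from the factorization $\vec{J}=\vec{N}\cdot(\partial v_i/\partial x_j)$ and to read off the sign of each summand from the stoichiometry together with the NAC assumption. Writing $S=S_i$ and $S'=S_j$, the edge $(S,S')$ carries the label
\[
  J_{ji}(\vec{x}) \;=\; \sum_{m=1}^{r} N_{j,m}\,\frac{\partial v_m}{\partial x_i}(\vec{x})\;.
\]
By the NAC assumption each factor $\partial v_m/\partial x_i$ is strictly positive on the positive orthant when $S=S_i$ is a reactant of $R_m$ and vanishes identically otherwise. Hence only reactions having $S$ as a reactant contribute, and for such an $R_m$ the sign of its summand equals $\sign(N_{j,m})$: it is positive exactly when $S'=S_j$ is produced by $R_m$, and negative exactly when $S'$ is consumed by $R_m$. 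This translation is the whole setup: a positive summand corresponds to a reaction in which $S$ is a reactant and $S'$ a product, a negative summand to a reaction in which $S$ and $S'$ are both reactants.

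First I would prove the ``if'' direction. The absence of the pair $R,R'$ described in the statement means precisely that, among all reactions in which $S$ is a reactant, $S'$ is never produced in one of them while being consumed in another; so every nonzero summand of $J_{ji}$ carries one and the same sign. Since each factor $\partial v_m/\partial x_i$ is strictly positive and the edge exists (so $J_{ji}\not\equiv 0$), the value $J_{ji}(\vec{x}_0)$ is nonzero and of that fixed sign for every $\vec{x}_0>\vec{0}$. Thus $\sign(J_{ji}(\vec{x}_0))$ is independent of $\vec{x}_0$ and the edge is sign-definite.

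For the converse I would argue contrapositively: if $R$ and $R'$ both exist, then $J_{ji}$ contains at least one strictly positive summand (from $R'$, where $S'$ is a product) and at least one strictly negative summand (from $R$, where $S'$ is a reactant). The label is then a difference of two sums of strictly positive terms, and I must show it genuinely attains both signs as $\vec{x}_0$ ranges over the positive orthant — not merely that the stoichiometric coefficients have opposite signs. The key is that the NAC assumption constrains only the \emph{signs} of the partial derivatives, not their magnitudes: the factors $\partial v_m/\partial x_i$ attached to distinct reactions are functionally independent positive quantities (already for mass-action kinetics the monomials of distinct reactant sets dominate one another in different regions of the orthant), so either the positive or the negative block can be made to dominate, and by the intermediate value theorem $J_{ji}$ also vanishes somewhere. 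Hence $\sign(J_{ji})$ is not constant and the edge fails to be sign-definite.

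The main obstacle is exactly this converse step, since sign-definiteness is a statement about the label as a \emph{function} of $\vec{x}$: mixed stoichiometric signs must be shown to force a sign change rather than only to permit one. I expect to close the gap by invoking the strict positivity and mutual independence of the NAC partial derivatives to balance the two sign blocks against each other; the remaining bookkeeping — matching ``$S$ reactant, $S'$ produced'' with the positive summand from $R'$ and ``$S,S'$ both reactants'' with the negative summand from $R$ — is the routine translation already recorded above.
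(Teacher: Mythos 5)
Your setup and your ``if'' direction coincide with the paper's proof: expand $\gamma((S_i,S_j))=J_{j,i}=\sum_{l} N_{j,l}\,\partial v_l/\partial x_i$, discard the reactions not having $S_i$ as a reactant (NAC makes those derivatives vanish identically), and observe that each surviving summand has the sign of $N_{j,l}$, i.e.\ is positive when $S_j$ is a product of $R_l$ and negative when $S_j$ is also a reactant. When no pair $R,R'$ as in the statement exists, all summands share one sign, the derivative factors are strictly positive, and the label is sign-definite. This half is complete and is exactly the paper's argument.

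For the converse you have correctly put your finger on a point the paper passes over in silence --- the published proof simply declares the sign ``indefinite'' once summands of both signs appear --- but your proposed repair does not go through as stated. Sign-definiteness is a property of the label as a function of $\vec{x}$ for the \emph{given} rate laws, and the NAC class is broad enough to contain fixed rate laws for which a mixed-sign sum is nevertheless one-signed everywhere: take $R\colon S_i+S_j\to\cdots$ with $v_R=x_i+x_j$ and $R'\colon S_i\to S_j$ with $v_{R'}=2x_i$; both satisfy the NAC conditions, yet $J_{j,i}=(-1)\cdot 1+(+1)\cdot 2=1>0$ on the whole positive orthant. So the ``mutual independence'' of the partial derivatives that you invoke is simply not available for a fixed NAC system, and the intermediate value theorem has nothing to bite on. The statement becomes true only if sign-definiteness is read as holding uniformly over all admissible NAC kinetics (which is what the paper's closing remark that the sign ``can be established from the stoichiometric matrix alone'' implicitly presupposes), or if one restricts to mass-action. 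Under either reading your sketch closes by exhibiting one admissible kinetics in which the sign flips: for mass-action every negative summand carries a factor $x_j^{y_{j,R}}$ with $y_{j,R}\geq 1$ (since $S_j$ is a reactant of $R$ and cannot also be its product), while every positive summand is independent of $x_j$, so the label is positive as $x_j\to 0^{+}$ and negative as $x_j\to\infty$. You should state which quantifier you are using and then run exactly that limit argument; as written, the converse is not proved --- though, to be fair, the paper's own proof does not prove it either.
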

\begin{proof}
W.l.o.g., let $S=S_i$ and $S'=S_j$. We have
\[
  \gamma((S_i,S_j)) \;=\; J_{j,i} = \sum_{l=1}^r N_{j,l}\,\frac{\partial v_l}{\partial x_i}\;.
\]
We look at each individual reaction $R_l$: if $S_i$ is not a reactant of $R_l$, then $\partial v_l/\partial x_i\equiv 0$ due to NAC kinetics. Assume $S_i$ is a reactant, then $\partial v_l/\partial x_i > 0$ and $N_{i,l}<0$. Then, either $S_j$ is also a reactant, in which case $N_{j,l}<0$ and the contribution of $R_l$ is negative, or $S_j$ is a product, in which case $N_{j,l}>0$ and the contribution is positive. Thus, the overall sign of $\gamma((S_i,S_j))$ is indefinite if and only if $S_j$ is a reactant in one and a product in another reaction, which can be established from the stoichiometric matrix alone.
\end{proof}

The proof of Lemma~\ref{lem:sign-definite} directly suggests a method to make each edge sign-definite by splitting it into a positive and a negative edge. Important features of qualitative dynamics are preserved by appropriate choice of constants~\cite{helton2010,sontag2007}. However, if a sign-indefinite edge is part of a cycle, this procedure will result in a graph with two cycles of opposite signs.

\begin{example}
\label{ex:smallJac}
The system of Example~\ref{ex:small} has Jacobian
\[
  \vec{J} = \begin{pmatrix}
    -a\frac{\partial v_1}{\partial A} - \frac{\partial v_3}{\partial A} & d\frac{\partial v_2}{\partial B} \\
    b\frac{\partial v_1}{\partial A} & -c\frac{\partial v_2}{\partial B} - \frac{\partial v_5}{\partial B}
  \end{pmatrix}\;,
\]
where we denoted $x_A\equiv A$ and $v_i\equiv v_i(x_A,x_B)$ etc.\ for brevity.

\begin{figure}[htbp]
\begin{center}
\includegraphics[width=0.7\textwidth]{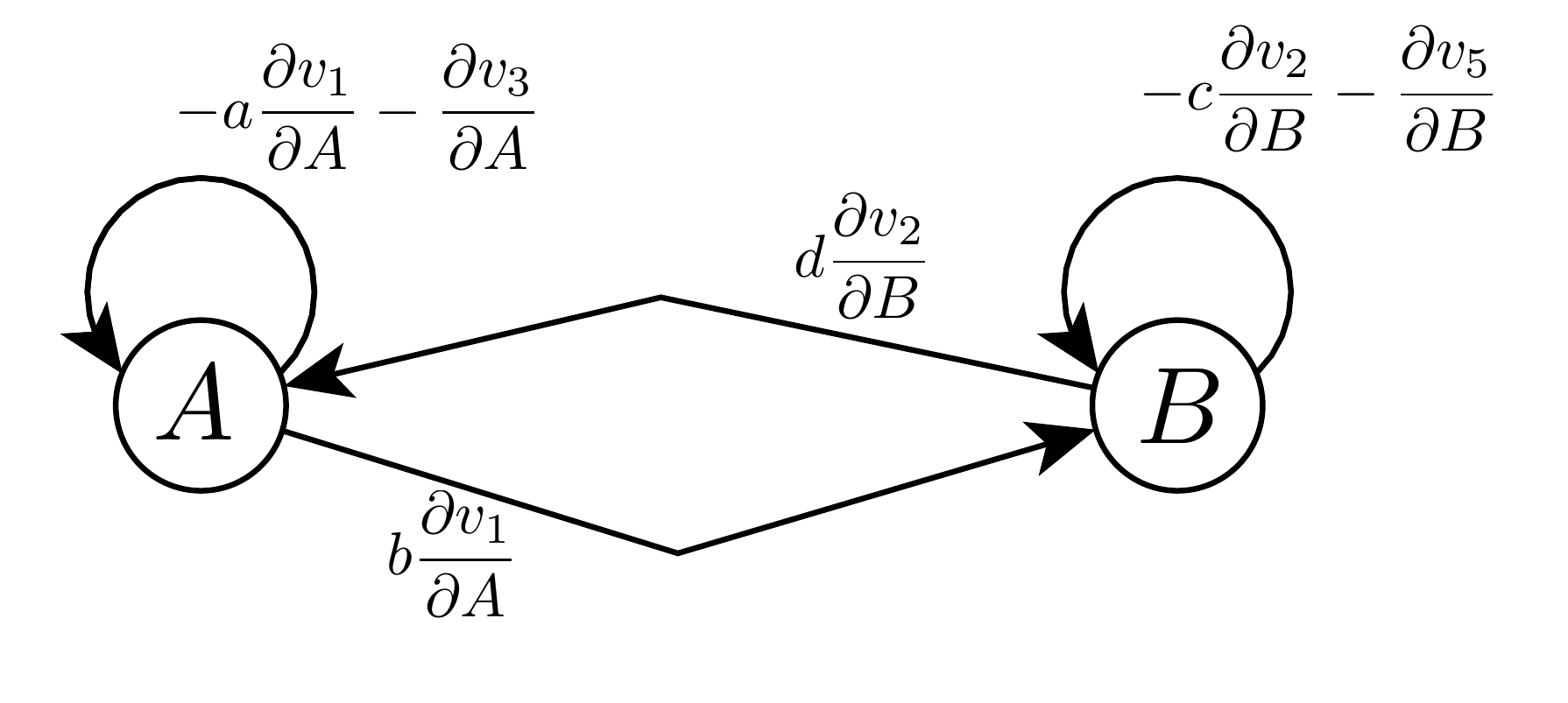}
\caption{Labelled interaction graph $G_I$ of example network. Note that production reactions $R_4$ and $R_6$ have zero partial derivatives and are thus not part of the graph.}
\label{fig:interaction}
\end{center}
\end{figure}

The corresponding interaction graph $G_I$ is given in Figure~\ref{fig:interaction}. It contains three simple cycles: the two self-loops $C_1=(A,A)$ with $\gamma(C_1)=-a\frac{\partial v_1}{\partial A} - \frac{\partial v_3}{\partial A}$, respectively $C_2=(B,B)$ with $\gamma(C_2)=-c\frac{\partial v_2}{\partial B} - \frac{\partial v_5}{\partial B}$, each of length 1, and the cycle $C_3 = (A, B, A)$ with $\gamma(C_3)=d\frac{\partial v_2}{\partial B}\cdot b\frac{\partial v_1}{\partial A}$ of length 2. The sign of each edge is independent of $\vec{x}_0$ and $G_I$ is therefore sign-definite.  
\end{example}

\section{Determinant Expansions and Characteristic Polynomial}
Recall that for a matrix $\vec{A}=(a_{i,j})\in\Rset^{n\times n}$, the determinant expansion is given by
\begin{equation}
\label{eq:detexpansionSn}
  \det(\vec{A}) = \sum_{\pi} (-1)^{\sign(\pi)}\prod_{i=1}^n a_{i,\pi(i)}\;,
\end{equation}
where $\pi$ runs over the permutation group on $\{1,\dots,n\}$ and $\sign(\pi)$ is its sign. An equivalent expansion can be formulated in purely graph-theoretic terms using line-graphs.
\begin{definition}[line-graph~\cite{harary1962}]
Let $C_1,\dots,C_q$ be a collection of disjoint simple cycles $C_i$ covering each vertex of $G$ exactly once:
\[
  V(G) = V(C_1) \cup \cdots \cup V(C_q)\;\text{and}\;V(C_i)\cap V(C_j)=\emptyset,\,i\not=j\;.
\]
Their union
\[
  L := \bigcup_{i=1}^q C_i \subseteq G
\]
is called a \emph{line-graph} of $G$. We denote by $\mathcal{L}(G)$ the collection of all line-graphs of $G$.
\end{definition}
A line-graph is also called \emph{Hamiltonian hooping}~\cite{soule2003} or \emph{nucleus}~\cite{domijan2011}, and the special definition of a \emph{subgraph} in~\cite{mincheva2007} relates to the same concept.

We again interpret $\vec{A}$ as the adjacency matrix of a labelled directed graph $G=G(\vec{A})$ with vertices $V=\{u_1,\dots,u_n\}$ and edge label function $\gamma((u_i,u_j))=A_{j,i}$. For any permutation $\pi$, the term $\prod_{i=1}^n A_{i,\pi(i)}$ is nonzero if and only if all corresponding edges $(u_{\pi(i)},u_i)$ exist. These edges induce a line-graph $L$ of $G$ and the product is the label of this line-graph: $\gamma(L)=\prod A_{i,\pi(i)}$. The sign of the permutation is also readily extracted from the graph.

\begin{definition}[signum of a sub-graph]
Let $H$ be any sub-graph of a directed graph $G$. Let $\epsilon(H)$ be the number of even-length cycles in $H$. We call the number
\[
  \sigGI(H) := (-1)^{\epsilon(H)}\;.
\]
the \emph{signum} of $H$.
\end{definition}
The signum $\sigGI(L)$ of a line-graph $L$ is the sign of the permutation described by $L$~\cite{harary1962}, is independent of $\sign(\gamma(H))$, and is directly related to the signum $\xi(L)$ proposed in~\cite{soule2003} via $\sigGI(L)=-\xi(-L)$, where $-L$ denotes the line-graph with edge-labels $-\gamma(\cdot)$.

\begin{lemma}[Harary~\cite{harary1962}]
\label{lem:detexpansionGI}
Let $\vec{A}\in\Rset^{n\times n}$ with graph $G=G(\vec{A})$. A determinant expansion of $\vec{A}$ is then given by
\begin{equation}
\label{eq:detexpansionGI}
  \det(\vec{A}) = \sum_{L\in\mathcal{L}(G)} \sigGI(L)\,\gamma(L)
    = \sum_{L\in\mathcal{L}(G)} \prod_{C\subseteq L} \sigGI(C)\, \gamma(C)\;.
\end{equation}   
\end{lemma}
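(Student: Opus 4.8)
The plan is to connect the standard permutation-based determinant expansion~(\ref{eq:detexpansionSn}) with the line-graph expansion~(\ref{eq:detexpansionGI}) by establishing a bijection between permutations $\pi$ of $\{1,\dots,n\}$ and line-graphs $L\in\mathcal{L}(G)$, and then verifying that the sign $(-1)^{\sign(\pi)}$ matches the signum $\sigGI(L)$ and that the product $\prod_i A_{i,\pi(i)}$ matches the label $\gamma(L)$. First I would recall that every permutation $\pi$ decomposes uniquely (up to ordering) into disjoint cycles on $\{1,\dots,n\}$, and that each such cyclic factor $(i_1\,i_2\,\cdots\,i_k)$ corresponds, under the adjacency interpretation $\gamma((u_i,u_j))=A_{j,i}$, to a directed simple cycle in $G$ through the vertices $u_{i_1},\dots,u_{i_k}$ (with fixed points $\pi(i)=i$ giving self-loops). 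Since the cyclic factors partition $\{1,\dots,n\}$, the induced simple cycles are vertex-disjoint and cover every vertex exactly once, so their union is precisely a line-graph $L$. The term $\prod_{i=1}^n A_{i,\pi(i)}$ is nonzero exactly when each edge $(u_{\pi(i)},u_i)$ exists in $G$, which is exactly the condition that this $L$ is a genuine subgraph; thus non-vanishing permutation terms correspond bijectively to line-graphs, and $\prod_i A_{i,\pi(i)}=\prod_i\gamma((u_{\pi(i)},u_i))=\gamma(L)$.

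Next I would handle the sign. The key classical fact is that a single $k$-cycle has sign $(-1)^{k-1}$, so a $k$-cycle is an odd permutation exactly when $k$ is even. For a permutation with cyclic factors of lengths $k_1,\dots,k_q$ one has $\sign(\pi)\equiv\sum_{j=1}^q(k_j-1)\pmod 2$, hence $(-1)^{\sign(\pi)}=\prod_{j=1}^q(-1)^{k_j-1}$. Comparing this with the definition of $\sigGI$, where $\epsilon(L)$ counts the even-length cycles of $L$, I would observe that $(-1)^{k_j-1}=-1$ precisely when $k_j$ is even, so $\prod_j(-1)^{k_j-1}=(-1)^{\epsilon(L)}=\sigGI(L)$. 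This establishes $(-1)^{\sign(\pi)}=\sigGI(L)$, and moreover shows the multiplicative factorization $\sigGI(L)=\prod_{C\subseteq L}\sigGI(C)$ over the constituent simple cycles, which simultaneously yields the second equality in~(\ref{eq:detexpansionGI}) once combined with $\gamma(L)=\prod_{C\subseteq L}\gamma(C)$.

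Putting these pieces together, the sum over permutations in~(\ref{eq:detexpansionSn}) reindexes term-by-term as the sum over line-graphs in~(\ref{eq:detexpansionGI}): permutations whose product vanishes contribute nothing and correspond to no line-graph, while each nonvanishing permutation contributes $(-1)^{\sign(\pi)}\prod_i A_{i,\pi(i)}=\sigGI(L)\,\gamma(L)$. The factorized form then follows by distributing $\sigGI$ and $\gamma$ over the disjoint simple cycles making up $L$.

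I expect the main obstacle to be bookkeeping rather than conceptual difficulty: I must be careful that the correspondence between permutations and line-graphs is genuinely a bijection (not just a surjection), which rests on the uniqueness of the disjoint-cycle decomposition of a permutation and on the fact that a collection of vertex-disjoint covering simple cycles reconstructs a unique permutation. I would also take care with the edge-orientation convention $\gamma((u_i,u_j))=A_{j,i}$, since the transpose here decides whether $\pi$ or $\pi^{-1}$ indexes the cycles; because $\sign(\pi)=\sign(\pi^{-1})$ and the cycle lengths are preserved under inversion, neither the sign nor the label is affected, but this should be checked explicitly rather than assumed. Finally, the treatment of fixed points as self-loops of length $1$ (which are even-length-free and contribute $\sigGI=+1$) must be consistent with the convention in the definition of a line-graph.
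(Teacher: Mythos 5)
Your proof is correct and follows exactly the argument the paper sketches (and attributes to Harary): the paper states this lemma without a formal proof, but the two paragraphs preceding it already contain your bijection between nonvanishing permutation terms and line-graphs, the identity $\gamma(L)=\prod_i A_{i,\pi(i)}$, and the claim that $\sigGI(L)$ equals the sign of the permutation, which you verify via the cycle decomposition and the parity count $(-1)^{\epsilon(L)}$. Your extra care about the transpose convention $\gamma((u_i,u_j))=A_{j,i}$ and about fixed points as self-loops is sound but does not change the route.
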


We will focus our discussion on $G_I(\vec{J})$, but remark that all arguments also hold for arbitrary square sub-matrices $\vec{J}_\alpha$ by using the corresponding sub-graph $G_I(\vec{J}_\alpha)\subseteq G_I(\vec{J})$ induced by $V(G(\vec{J}_\alpha))=\{V_i\,|\, V_i\in V(G_I),\,i\in \alpha\}$. Each principal minor corresponds to a set of line-graphs of $G(\vec{J}_\alpha)\subseteq G(\vec{J})$, but each line-graph is associated to exactly one principal minor and we can compute each coefficient $c_i$ of the characteristic polynomial by investigating the line-graphs of all sub-graphs $G(\vec{J}_\alpha)$ with $|\alpha|=n-i$.

\begin{example}
\label{ex:smallcont}
Consider again the system of Example~\ref{ex:small}. Using expansion via permutations, the Jacobian matrix has determinant
\[
  \det(\vec{J}) \;=\;
  -bd\frac{\partial v_1}{\partial A}\frac{\partial v_2}{\partial B}  + 
    ac\frac{\partial v_1}{\partial A}\frac{\partial v_2}{\partial B}
    +a\frac{\partial v_1}{\partial A}\frac{\partial v_5}{\partial B}
    +c\frac{\partial v_3}{\partial A}\frac{\partial v_2}{\partial B}
    +\frac{\partial v_3}{\partial A}\frac{\partial v_5}{\partial B}
\]
With $a,b,c,d>0$, all partial derivatives are non-negative. Thus, $\det(\vec{J})$ is sign-definite and non-negative for all concentrations if $ac\geq bd$, whereas for $ac<bd$, the sign depends on the concentrations at which $\vec{J}$ is evaluated. 

The two line-graphs in this example are $L_1=C_1\cup C_2$ and $L_2=C_3$ with cycles $C_i$ as above, and thus $\mathcal{L}(G_I)=\{L_1,L_2\}$. Thus,
\begin{align*}
  \det(\vec{J}) &= \sum_{L\in\{L_1,L_2\}} \prod_{C\subseteq L} \sigGI(C)\,\gamma(C) \\
  &= (-1)^0\cdot \left(-a\frac{\partial v_1}{\partial A} - \frac{\partial v_3}{\partial A}\right)
            \cdot (-1)^0\cdot 
            \left(c\frac{\partial v_2}{\partial B} - \frac{\partial v_4}{\partial B}\right) \\
            &\quad\;+\; (-1)^1\cdot 
            \left(d\frac{\partial v_2}{\partial B}\right)\cdot\left(b\frac{\partial v_1}{\partial A}\right)\;.
\end{align*}
Because the edge-labels are sums of terms, the above condition for positivity of the determinant cannot be directly derived from the cycles alone.
\end{example}

\section{The Directed Species-Reaction Graph}
Analysis of qualitative dynamics via interaction graphs is considerably hampered by the fact that most networks contain sign-indefinite edges. Moreover, edge-labels are often sums of terms containing different rate-derivatives, making them hard to compare independently of a species concentration $\vec{x}$. These problems can all be addressed by exploiting the particular structure of chemical reaction systems which naturally leads to a bipartite graph with vertices for species and reactions. Our proposed \emph{directed species-reaction graph} directly relates to previous definitions of bipartite graphs and to the interaction graph.

\begin{definition}[directed species-reaction graph]
\label{def:dsr}
The \emph{directed species-reaction graph (DSR-graph)} $G=(V_S,V_R,E,\lambda)$ of a chemical reaction network is a bipartite, directed graph with edge-label function $\lambda$ given by the sets of
\begin{align*}
  \text{\emph{species vertices} }V_S &= \{S_1,\dots,S_n\} \\
  \text{\emph{reaction vertices} }V_R &= \{R_1,\dots,R_r\} \\
  \text{edges }E &= E_{SR}\cup E_{RS} \text{ consisting of}\\
  \text{\emph{rate edges} }E_{SR} &= \left\{(S_i,R_j)\in V_S\times V_R \,|\, \partial v_j/\partial x_i\not\equiv 0\right\}\text{ and}\\
    \text{\emph{stoichiometric edges} }E_{RS} &= \left\{(R_j,S_i) \in V_R\times V_S \,|\, N_{i,j}\not=0 \right\}
\end{align*}
and the
\begin{align*}
  \text{\emph{edge-label function} }\lambda((a,b)) &= \begin{cases}
    \frac{\partial v_j}{\partial x_i}, & \text{if } (a,b)=(S_i,R_j)\in E_{SR} \\
    N_{i,j}, &\text{if } (a,b)=(R_j,S_i)\in E_{RS} \\
    0, &\text{else.}
  \end{cases}     
\end{align*}
The restriction of $\lambda$ to $E_{SR}$ and $E_{RS}$ is denoted by $\lambda_{SR}$ (a \emph{rate label}) and $\lambda_{RS}$ (a \emph{stoichiometric label}), respectively, such that for a sub-graph $H$ of $G$,
\[
  \lambda_{RS}(H) := \left.\lambda\right|_{E_{RS}}(H) = 
    \prod_{(S_i,R_j)\in E_{RS}(H)} N_{ij} \in\Rset
\]  
and
\[
  \lambda_{SR}(H) := \left.\lambda\right|_{E_{SR}}(H) = \prod_{e\in E_{SR}(H)} \lambda(e)\;.
\]
Importantly, $\lambda_{SR}(H)(\vec{x})>0$ for all $\vec{x}>\vec{0}$. Either label remains undefined if the respective edge-set is empty.
\end{definition}
The generic DSR-graph of the network from Example~\ref{ex:small} is given in Figure~\ref{fig:dsr}. This graph has an intuitive interpretation: a reaction vertex represents the rate of the reaction, which is positively influenced only by its reactant species. A change in the rate on the other hand implies a positive change in the rate of the products, and a negative change in the rate of the reactants; this is reflected by the corresponding edges. For NAC rate laws, the rate-edge labels are positive functions, and the label of a stoichiometric edge is a positive constant if the species is a product, and a negative constant if it is a reactant of the respective reaction. In contrast to the interaction graph, sub-graphs of $G$ are therefore always sign-definite.

\begin{proposition}[sign-definite sub-graphs]
\label{prop:DSRsign-definite}
Let $H\subseteq G$ be any sub-graph of a DSR-graph $G$ such that $E_{RS}(H)\not=\emptyset$. Then,
\[
  \sign(\lambda(H)) \;\equiv\; \sign\left(\lambda_{RS}(H)\right)
\]
is independent of $\vec{x}>\vec{0}$. 
\end{proposition}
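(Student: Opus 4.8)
The plan is to exploit the fact that the edge set of any sub-graph partitions into rate edges and stoichiometric edges, so that the sub-graph label factors as a product over each edge type. Writing $\lambda(H) = \lambda_{SR}(H)\cdot\lambda_{RS}(H)$ whenever both factors are defined, I would reduce the claim to two observations: that the rate-edge factor is strictly positive and hence contributes a trivial sign, and that the stoichiometric-edge factor is a product of nonzero constants whose sign is fixed independently of $\vec{x}$.

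First I would treat the rate-edge factor. By Definition~\ref{def:dsr}, a rate edge $(S_i,R_j)$ belongs to $E_{SR}$ precisely when $\partial v_j/\partial x_i\not\equiv 0$, which under NAC kinetics forces $S_i$ to be a substrate of $R_j$ and therefore $\partial v_j/\partial x_i>0$ for all $\vec{x}>\vec{0}$. Each factor of $\lambda_{SR}(H)$ is thus a strictly positive function, so $\lambda_{SR}(H)(\vec{x})>0$ for every $\vec{x}>\vec{0}$ (this is exactly the positivity asserted in Definition~\ref{def:dsr}). If $E_{SR}(H)=\emptyset$, the empty product equals $1>0$, so positivity holds in this degenerate case as well.

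Next I would treat the stoichiometric-edge factor. The hypothesis $E_{RS}(H)\not=\emptyset$ guarantees that $\lambda_{RS}(H)$ is a genuine (nonempty) product and hence defined. Every stoichiometric edge $(R_j,S_i)$ carries the label $N_{i,j}$, which by the sign conventions imposed on $\vec{N}$ is a nonzero real constant, independent of $\vec{x}$. Consequently $\lambda_{RS}(H)$ is a nonzero constant and $\sign(\lambda_{RS}(H))$ is a well-defined sign not depending on the concentration vector.

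Combining the two, I would conclude $\sign(\lambda(H)) = \sign(\lambda_{SR}(H))\cdot\sign(\lambda_{RS}(H)) = \sign(\lambda_{RS}(H))$, since the first factor is $+1$ for all $\vec{x}>\vec{0}$; this sign is independent of $\vec{x}$ by the preceding paragraph. I do not anticipate a genuine obstacle here, as the argument is a direct factorization; the only points requiring care are the degenerate-case bookkeeping (the empty rate-edge product, and the role of the hypothesis $E_{RS}(H)\not=\emptyset$ in keeping $\lambda_{RS}(H)$ defined) and the explicit appeal to NAC kinetics to upgrade ``$\not\equiv 0$'' to strict positivity of each rate label.
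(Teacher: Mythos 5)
Your argument is correct and is essentially the justification the paper itself gives (the proposition is stated without a formal proof, but the preceding paragraph makes exactly your point: under NAC kinetics every rate-edge label is a strictly positive function of $\vec{x}>\vec{0}$, while every stoichiometric label is a nonzero constant, so the sign of $\lambda(H)=\lambda_{SR}(H)\cdot\lambda_{RS}(H)$ is carried entirely by $\lambda_{RS}(H)$). Your handling of the degenerate case differs trivially in convention --- the paper declares $\lambda_{SR}(H)$ undefined when $E_{SR}(H)=\emptyset$ whereas you take the empty product to be $1$ --- but this does not affect the result.
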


We briefly discuss some main differences of the DSR-graph to previous definitions of bipartite species-reaction graphs: The \emph{SR-graph} of~\cite{craciun2006} uses undirected edges and labels them by the \emph{complex} in which the species occurs. Species on the same side of a reaction form a \emph{c-pair}. All possible orientations of the graph are considered in proofs. In the DSR-graph, these information are encoded explicitly in the existence and direction of edges and two products do not form a c-pair. Instead of c-pairs, a similar undirected graph in~\cite{banaji2010} labels edges by $+1$ or $-1$ to the same effect. The graph proposed in~\cite{mincheva2007} uses directed edges, but does not contain edges from a reaction to its substrate. Directed edges from a substrate to its reaction can instead be traversed in opposite direction, while directed edges from a reaction to a product cannot, which also necessitates to allow semi-cycles and paths in a line-graph. In this graph, mass-action kinetics is also exploited by merging the corresponding factor from $\partial v_j/\partial x_i$ (which is the substrate molecularity of $S_i$ in $R_j$) with the stoichiometric label and using relative concentrations.

\begin{figure}[htbp]
\begin{center}
\includegraphics[width=\textwidth]{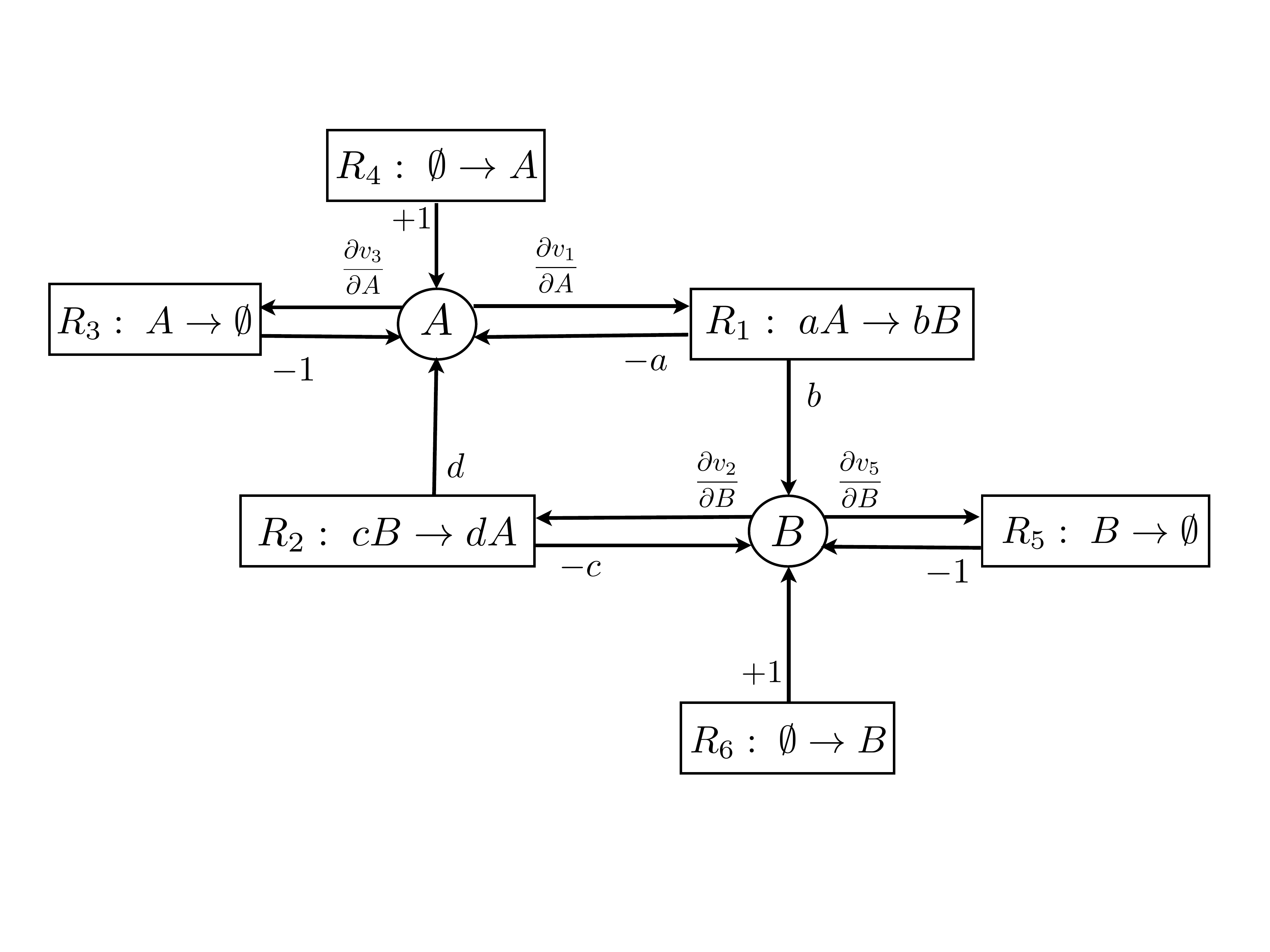}
\caption{Directed species-reaction graph for network of Example~\ref{ex:small}. Species vertices are given as circles, reaction vertices as rectangles. Production reactions $R_4, R_6$ are shown explicitly, but can be neglected for analyses. We again used $A,B$ instead of $x_A,x_B$ for readability.}
\label{fig:dsr}
\end{center}
\end{figure}

The proposed DSR-graph directly relates to the interaction graph of the same network. Most results on injectivity and bifurcations rely on the fact that simple paths and simple cycles in $G_I$ translate to simple paths and simple cycles in $G$, and we therefore emphasize this relation.

\begin{lemma}[relation of DSR- and interaction graph]
\label{lem:GIandG}
Let $G=(V_S,V_R,E,\lambda)$ be the DSR-graph of a chemical reaction network. The interaction graph $G_I=(V(G_I),E(G_I),\gamma)$ of that network is then found as:
\begin{align*}
  V(G_I) &= V_S \\
  E(G_I) &= \{(S,S') \in V_S\times V_S\,|\, \exists R\in V_R: (S,R),(R,S')\in E\} \\
  \gamma((S,S')) &= \sum_{R\in V_R} \lambda((S,R))\cdot\lambda((R,S'))\;.
\end{align*}
\end{lemma}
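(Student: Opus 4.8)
The plan is to show that the interaction graph's structure and labels are recovered from the DSR-graph by composing stoichiometric and rate edges through reaction vertices. First I would recall the explicit formula for the Jacobian entry from the factorization $\vec{J} = \vec{N}\cdot(\partial v_i/\partial x_j)$, which gives
\[
  J_{j,i} = \sum_{l=1}^r N_{j,l}\,\frac{\partial v_l}{\partial x_i}\;,
\]
exactly as used in the proof of Lemma~\ref{lem:sign-definite}. Reading off the definition of $G_I$, the edge $(S_i,S_j)$ carries label $\gamma((S_i,S_j)) = J_{j,i}$, so the goal reduces to expressing this sum purely in terms of DSR-edge labels.

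Next I would translate each factor in the sum into a DSR-edge label. By Definition~\ref{def:dsr}, the rate edge $(S_i,R_l)$ exists precisely when $\partial v_l/\partial x_i\not\equiv 0$ and then carries label $\lambda((S_i,R_l)) = \partial v_l/\partial x_i$; the stoichiometric edge $(R_l,S_j)$ exists precisely when $N_{j,l}\not= 0$ and carries label $\lambda((R_l,S_j)) = N_{j,l}$. Hence each nonzero summand $N_{j,l}\,(\partial v_l/\partial x_i)$ equals $\lambda((R_l,S_j))\cdot\lambda((S_i,R_l))$, a two-edge path $S_i\to R_l\to S_j$ through reaction vertex $R_l$. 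I would then argue that the summand vanishes identically exactly when at least one of the two edges is absent, so extending the sum over all $R\in V_R$ (with absent edges contributing label $0$, matching the ``else'' case of $\lambda$) changes nothing. This yields the claimed label formula
\[
  \gamma((S_i,S_j)) = \sum_{R\in V_R} \lambda((S_i,R))\cdot\lambda((R,S'))
\]
after relabeling $S_j=S'$.

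For the vertex and edge-set claims, I would observe that $V(G_I)=V_S=\{S_1,\dots,S_n\}$ is immediate from both definitions. For the edge set, the forward direction is clear: if $(S,S')\in E(G_I)$ then $\gamma((S,S'))\not\equiv 0$, and since a sum of monomials of the form $N_{j,l}\,(\partial v_l/\partial x_i)$ is not identically zero only if some factor-pair is present, there exists $R$ with both $(S,R)$ and $(R,S')$ in $E$. The subtle point in the converse is whether the mere existence of a composed path forces $\gamma((S,S'))\not\equiv 0$: in principle, distinct reaction vertices could contribute terms that cancel. Under the NAC assumption, however, each $\partial v_l/\partial x_i$ is a strictly positive function on $\vec{x}>\vec{0}$, so the only way the sum can vanish identically is if every pairing is already absent; cancellation between two genuinely present positive rate-terms of differing sign is impossible as an identity in $\vec{x}$.

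I expect the main obstacle to be precisely this identical-vanishing subtlety in the edge-set characterization: reconciling the convention ``$J_{j,i}\not\equiv 0$'' used to define $E(G_I)$ with the existential condition ``$\exists R$'' in the statement. The resolution hinges on the sign-definiteness established earlier (Lemma~\ref{lem:sign-definite} and Proposition~\ref{prop:DSRsign-definite}): because each rate label is a positive function and each stoichiometric label a nonzero constant of fixed sign, no nontrivial cancellation of the summands as functions of $\vec{x}$ can occur, so presence of a composed path is equivalent to non-identical-vanishing of the label. Once this is spelled out, the three claimed identities follow by direct substitution, and the remaining steps are routine bookkeeping.
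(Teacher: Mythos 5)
Your computation coincides with the paper's: its proof likewise writes $J_{j,i}=\sum_{l} N_{j,l}\,\partial v_l/\partial x_i$, identifies each nonzero summand with the 2-path $(S_i,R_l,S_j)$ through the reaction vertex $R_l$, and concludes that $\gamma$ is reconstructed from $\lambda$ by summing over these 2-paths. The label formula and the forward inclusion (an edge of $G_I$ forces the existence of at least one 2-path, since a sum that is not identically zero must contain a summand that is not identically zero) are handled the same way in both, and that part of your argument is fine.

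The one point where you go beyond the paper is the converse inclusion of the edge-set identity, and there your key claim fails: ``cancellation between two genuinely present positive rate-terms of differing sign is impossible as an identity in $\vec{x}$'' is not true. Positivity of each $\partial v_l/\partial x_i$ does not prevent the sum from vanishing identically, because the stoichiometric factors $N_{j,l}$ may have opposite signs. If $S$ is a substrate of two reactions $R_1,R_2$ with $S'$ a product of $R_1$ but a reactant of $R_2$, the two summands have opposite signs, and one can choose NAC rate laws (e.g.\ $v_2$ additively separable in the concentration of $S$) for which $N_{j,1}\,\partial v_1/\partial x_i + N_{j,2}\,\partial v_2/\partial x_i\equiv 0$ even though both 2-paths exist in $G$. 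Your appeal to Lemma~\ref{lem:sign-definite} points in the wrong direction: that lemma identifies exactly this configuration as the sign-\emph{indefinite} case, and Proposition~\ref{prop:DSRsign-definite} concerns single products of DSR-labels, not sums over an equivalence class, so neither rules the cancellation out. To be fair, the paper's own proof only argues the forward inclusion and asserts the biconditional without comment, so you have correctly located the subtlety; but the justification you offer for it does not hold, and a clean treatment would either impose a no-cancellation (genericity) hypothesis or weaken the edge-set equality to the inclusion actually used in the rest of the paper.
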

\begin{proof}
The only technicality is to see that if $(S_i,S_j)\in E(G_I)$, then there is at least one reaction $R_l$ such that both $(S_i,R_l)$ and $(R_l,S_j)$ are edges in $E$. This is because 
\[
  (S_i,S_j)\in E(G_I) \iff \frac{\partial f_j}{\partial x_i} \not\equiv 0
  \iff \exists l: N_{j,l} \frac{\partial v_l}{\partial x_i} \not\equiv 0\;.
\]
Indeed, $\frac{\partial f_j}{\partial x_i}  =\sum_{l} N_{j,l} \frac{\partial v_l}{\partial x_i}$, which allows reconstruction of $\gamma$ from $\lambda$.
\end{proof} 

An edge in $G_I$ may thus correspond to several 2-paths from $V_S$ to $V_S$ in $G$. Each summand of the edge-label in $G_{I}$ corresponds exactly to one of the labels of a 2-path in $G$.

\begin{example}
In the network of Example~\ref{ex:small}, consider the upper-left entry 
\[
  J_{1,1} \;=\; -a\frac{\partial v_{1}}{\partial x_{A}}-\frac{\partial v_{3}}{\partial x_{B}}
\]
in the Jacobian matrix. The corresponding edge $(A,\, A)$ in $G_{I}$ corresponds to the 2-paths $(A,R_{1},A)$ with label $-a\frac{\partial v_{1}}{\partial x_{A}}$ and $(A,R_{3},A)$ with label $-\frac{\partial v_{3}}{\partial x_{B}}$.
\end{example}

With $\vec{J}=\vec{N}\cdot(\partial \vec{v} / \partial \vec{x})$, the stoichiometric matrix is the incidence matrix describing stoichiometric edges and their labels in $G$, while $(\partial \vec{v} / \partial \vec{x})$ is the incidence matrix describing the rate edges and their labels. The $(n+r)\times (n+r)$ adjacency matrix of $G$ is
\begin{equation*}
  \vec{B}=\begin{pmatrix}
    \vec{0} & \vec{N} \\
    (\partial \vec{v} / \partial \vec{x}) & \vec{0}
  \end{pmatrix}\;.
\end{equation*}
All 2-paths from $V_S$ to $V_S$ are described by the upper-left $n\times n$ sub-matrix of $\vec{B}^2$, which is just $\vec{J}$. By extension, an edge, a simple cycle or a line-graph in $G_I$ typically correspond to several 2-paths, cycles or sub-graphs in $G$, respectively. This one-to-many mapping induces an equivalence relation on $G$.

\begin{definition}[equivalence; species-cycle; species-line-graph]
\label{lem:equiv-eCL}
Consider a reaction network with DSR-graph $G=(V_S,V_R,E,\lambda)$ and interaction graph $G_I=(V(G_I),E(G_I),\gamma)$. Let $e=(S,S')\in E(G_I)$ be an edge and denote by
\[
  \langle e \rangle := \{((S,R),(R,S'))\in E\times E\,|\, R\in V_R\}
\]
the set of all corresponding 2-paths. Two 2-paths $p,p'$ are \emph{$G_I$-equivalent} if $p,p'\in \langle e \rangle$.

The equivalence relation is extended to cycles $C=(e_1,\dots,e_q)$ and line-graphs $L=C_1+\cdots +C_q$ of $G_I$ by 
\begin{align*}
  \langle C \rangle &:= \{(p_1,\dots,p_q)\,|\, p_i\in \langle e_i \rangle\} \\
  \langle L \rangle &:= \{D_1+\cdots +D_q\,|\,D_i\in \langle C_i \rangle\}\;.
\end{align*} 
An element in $\langle C \rangle$ or $\langle L \rangle$ is called a \emph{species-cycle} or \emph{species-line-graph} in $G$, respectively. Each species-line-graph is a set of simple disjoint species-cycles covering each species-vertex exactly once. We again denote the set of all species-line-graphs in $G$ by
\begin{equation*}
  \mathcal{L}(G) := \bigcup_{L\in\mathcal{L}(G_I)} \langle L \rangle\;.
\end{equation*}
\end{definition}

\begin{example}
\label{ex:mm}
Consider a Michaelis-Menten type mechanism, given by reactions
\begin{align*}
  R_{1,2}:& E+S\rightleftarrows ES \\
  R_3:& ES\to E+P
\end{align*}
The DSR-graph for this mechanism is given in Figure~\ref{fig:dsrmm}. The edge $e=(ES,E)\in E(G_I)$ corresponds to the equivalence class $\langle e \rangle=\{(ES,R_2,E), (ES,R_3,E)\}$ in $G$. The cycle $C=(ES,S,ES)$ of $G_I$ thus corresponds to two species-cycles in $G$, using either two-path from $\langle e \rangle$ together with the two-path $(E,R_1,ES)$:
\[
  \langle C \rangle  =  \left\{(ES,R_2,E),\,(E,R_1,ES),\,(ES,R_3,E,ES) \right\} \;.
\]

\begin{figure}[htbp]
\begin{center}
\includegraphics[width=\textwidth]{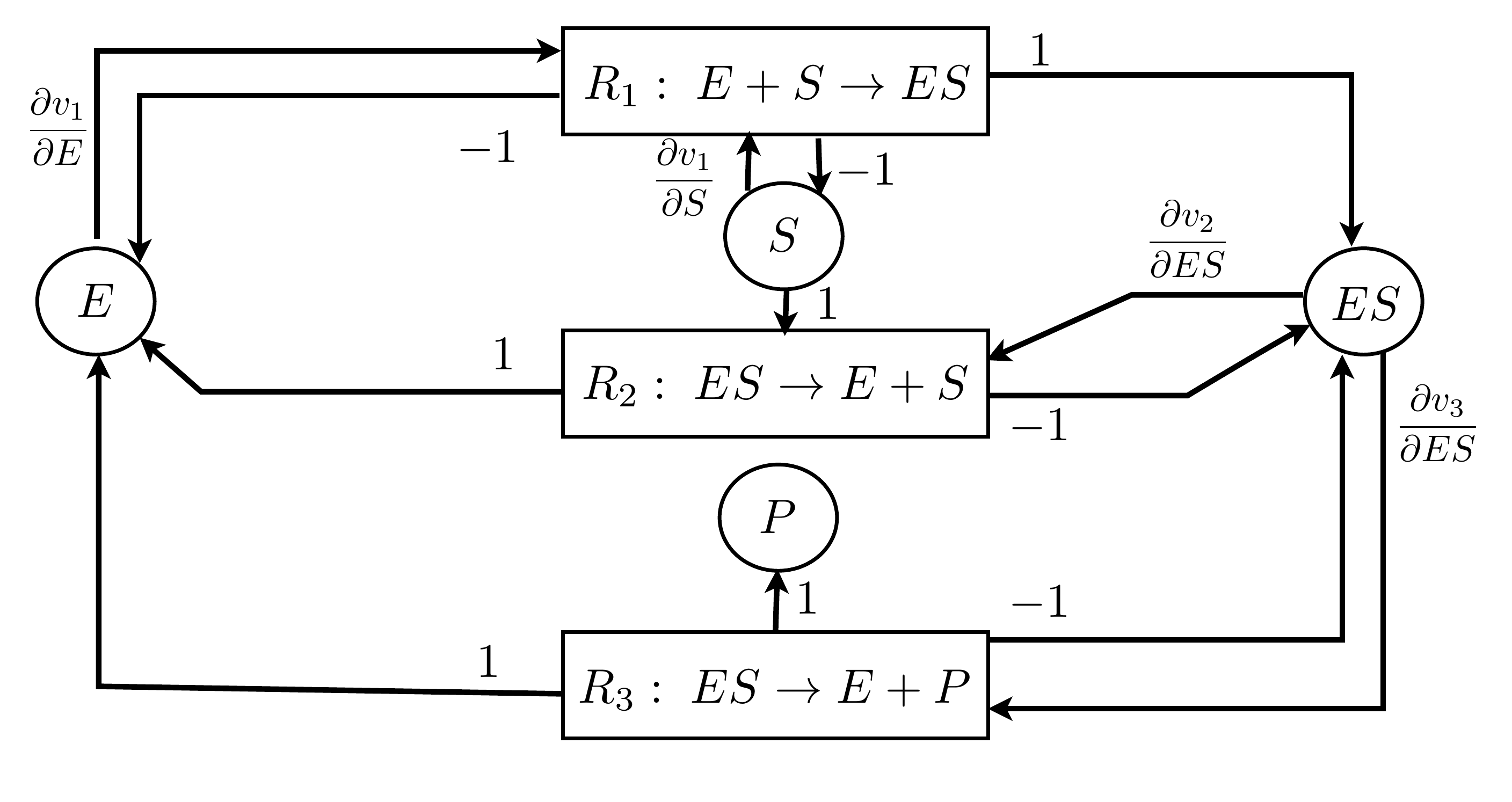}
\caption{DSR-graph of Michaelis-Menten type mechanism of Example~\ref{ex:mm}.}
\label{fig:dsrmm}
\end{center}
\end{figure}
\end{example}

The label-function $\gamma$ of $G_I$ can be reconstructed from the label-function $\lambda$ of $G$ by summing over elements of the corresponding equivalence class. For cycles and line-graphs, this allows either a product-of-sums or a sum-of-products representation of the label.
\begin{lemma}[label-function]
\label{lem:gamma-reconstruct}
Let $e$ be an edge, $C$ be a cycle, and $L$ be a line-graph of an interaction graph $G_I$ and let $\lambda$ be the edge-label function of the corresponding DSR-graph $G$. Then,
\begin{align*}
  \gamma(e) &= \sum_{p\in \langle e \rangle} \lambda(p)\\
  \gamma(C) &= \sum_{C'\in \langle C \rangle} \lambda(C') = \prod_{e\in C}\sum_{p\in \langle e \rangle}\lambda( p) = 
    \sum_{C'\in \langle C\rangle}\prod_{e\in C'}\lambda(e)\\
  \gamma(L) &= \sum_{L'\in \langle L \rangle} \lambda(L') =\prod_{C\subseteq L}\sum_{C'\in \langle C \rangle}\lambda(C') =
    \sum_{L'\in \langle L \rangle}\prod_{C'\subseteq L'}\lambda(C')\;.
\end{align*}
\end{lemma}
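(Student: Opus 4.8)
The plan is to prove the three displayed identities in order, each reduced to the previous one by combining the sub-graph-label convention $\gamma(H)=\prod_{e\in E(H)}\gamma(e)$ (and its analogue $\lambda(H)=\prod_{e\in E(H)}\lambda(e)$) with a single application of the generalized distributive law. First I would dispatch the edge identity $\gamma(e)=\sum_{p\in\langle e\rangle}\lambda(p)$, which is essentially a restatement of Lemma~\ref{lem:GIandG}. For $e=(S,S')$ that lemma gives $\gamma(e)=\sum_{R\in V_R}\lambda((S,R))\,\lambda((R,S'))$, and a 2-path $p=((S,R),(R,S'))\in\langle e\rangle$ carries label $\lambda(p)=\lambda((S,R))\,\lambda((R,S'))$. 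Since $\lambda$ vanishes on any pair in which one of the two edges is absent, the sum over all $R\in V_R$ agrees with the sum over the genuine 2-paths, so the two expressions coincide and the base case is done.

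For the cycle identity, write $C=(e_1,\dots,e_q)$ and start from $\gamma(C)=\prod_{e\in C}\gamma(e)$. Substituting the edge identity into each factor produces the product-of-sums $\prod_{e\in C}\sum_{p\in\langle e\rangle}\lambda(p)$, the middle expression in the chain; expanding this product by distributivity gives $\sum_{(p_1,\dots,p_q)}\prod_i\lambda(p_i)$, where $(p_1,\dots,p_q)$ ranges over all choices with $p_i\in\langle e_i\rangle$, which is exactly the index set $\langle C\rangle$ of Definition~\ref{lem:equiv-eCL}. Recognizing $\prod_i\lambda(p_i)=\lambda(C')=\prod_{e\in C'}\lambda(e)$ for the associated species-cycle $C'$ recovers the two sum-of-products forms. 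The line-graph identity follows by repeating the same three moves one level up: expand $\gamma(L)=\prod_{C\subseteq L}\gamma(C)$, substitute the cycle identity into each factor, and distribute, with the tuples $(D_1,\dots,D_q)$, $D_i\in\langle C_i\rangle$, matching $\langle L\rangle$ and $\prod_i\lambda(D_i)=\lambda(L')=\prod_{C'\subseteq L'}\lambda(C')$.

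The substance of the argument is bookkeeping rather than mathematics, so I do not expect a genuine obstacle; the only points needing care are two. First, I must confirm that the index set produced by each distributive expansion is in bijection with the equivalence class as defined, and that the concatenated 2-paths actually close up into a species-cycle — which they do automatically, since consecutive edges $e_i=(S_i,S_{i+1})$ of a cycle in $G_I$ force the endpoint of $p_i$ and the start of $p_{i+1}$ to be the shared species vertex $S_{i+1}$. Second, the treatment of absent edges should be handled uniformly by setting $\lambda=0$ on non-edges, exactly as in the base case, which keeps every sum well-defined and lets the distributive law apply without case distinctions. The content of the lemma lies precisely in exhibiting the product-of-sums and sum-of-products representations simultaneously, both of which drop out of this single expand-substitute-distribute pattern.
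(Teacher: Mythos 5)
Your argument is correct and is exactly the (omitted) justification the paper intends: the edge case is Lemma~\ref{lem:GIandG} restated, and the cycle and line-graph identities follow by substituting into $\gamma(H)=\prod_{e\in E(H)}\gamma(e)$ and distributing, with the resulting index sets matching $\langle C\rangle$ and $\langle L\rangle$ as defined. The paper states this lemma without proof, so there is no divergence to report.
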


\section{Non-vanishing Determinants and DSR-Graph}
We now turn our attention to the expansion of $\det(\vec{J})$ (and consequently $\det(\vec{J}_\alpha)$) in terms of the DSR-graph. We are particularly interested in conditions that guarantee that the determinant does not vanish for any positive states $\vec{x}>\vec{0}$. 

\begin{lemma}[signum of sub-graph]
Let $H\subseteq G$ be a sub-graph of G and define the \emph{signum} of $H$ as
\[
  \sigma(H):=(-1)^{\epsilon(H)}
\]
with $\epsilon(H)$ the number of cycles in $H$ with even number of species-vertices. Then,
\[
  \sigGI(C) \;=\; \sigma(D)
\]
for any cycle $C\subseteq G_I$ and $D\in \langle C \rangle$.
\end{lemma}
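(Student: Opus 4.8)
The plan is to reduce each signum to a single parity and check that the two parities agree. Because the statement concerns a \emph{single} cycle $C\subseteq G_I$ and a \emph{single} species-cycle $D\in\langle C\rangle$, both $\sigGI(C)$ and $\sigma(D)$ are determined by whether one integer is even or odd, so it suffices to identify those integers and see that they coincide.

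First I would evaluate $\sigGI(C)$. If $C$ has length $q$, then $C$, regarded as a sub-graph, contains exactly one cycle (itself), so $\epsilon(C)\in\{0,1\}$ and equals $1$ precisely when $q$ is even. Hence $\sigGI(C)=-1$ if $q$ is even and $\sigGI(C)=+1$ if $q$ is odd; this is of course just the sign $(-1)^{q-1}$ of a $q$-cycle permutation recorded earlier. Note that $q$ is at once the number of edges of $C$ and the number of species-vertices it visits.

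Next I would unfold $D$ using the construction of $\langle C\rangle$ in Definition~\ref{lem:equiv-eCL}. Writing $C$ as the cyclic sequence $S_{i_1}\to S_{i_2}\to\cdots\to S_{i_q}\to S_{i_1}$, the species-cycle $D$ replaces each edge $(S_{i_k},S_{i_{k+1}})$ by a $2$-path $S_{i_k}\to R_{j_k}\to S_{i_{k+1}}$ through a single reaction vertex. Concatenating these $q$ two-paths gives one closed cycle that alternates species- and reaction-vertices and passes through exactly the same species $S_{i_1},\dots,S_{i_q}$, each interleaved with a reaction vertex. The only real content of the lemma is the resulting observation that the number of species-vertices of $D$ equals the length $q$ of $C$: every edge of $C$ contributes precisely one species-vertex (its endpoint) to $D$. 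Since $D$ is a single cycle, $\epsilon(D)\in\{0,1\}$ counts $D$ once and equals $1$ exactly when its species-vertex count $q$ is even, so $\sigma(D)=-1$ if $q$ is even and $+1$ otherwise, which matches $\sigGI(C)$ value for value.

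The parity bookkeeping is routine; the step that needs care is the middle one, namely verifying that replacing a $G_I$-edge by its $2$-path in $G$ leaves the species-vertex count unchanged, so that $D$ carries exactly $q$ species-vertices rather than some other number. A minor subtlety worth a remark is the degenerate case in which two of the $2$-paths of $D$ share a reaction vertex; there $D$ should still be read as the single closed cycle determined by the cyclic order of $C$, with its $q$ species-vertices counted once, and the argument then goes through unchanged.
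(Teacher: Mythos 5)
Your proof is correct and follows essentially the same route as the paper's: both reduce the claim to the observation that a species-cycle $D\in\langle C\rangle$ has exactly as many species-vertices as $C$ has edges/vertices, so the two signa are governed by the same parity. Your version is merely more explicit about the bookkeeping (and, incidentally, gets the sign $(-1)^{q-1}$ right where the paper's one-line proof sloppily writes $(-1)^k$; the equality holds either way since both sides use the same convention).
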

\begin{proof}
A species-cycle $D$ in $G$ with $k$ species-vertices corresponds to a cycle $C$ in $G_{I}$ of length $k$. Thus, $\omega( C)=(-1)^k=\sigma(D)$.
\end{proof}

A determinant expansion purely in terms of a DSR-graph is now easily found.
\begin{lemma}[determinant expansion by DSR-graph]
\label{lem:detexpansionGSR}
Consider a chemical reaction network with Jacobian $\vec{J}$ and DSR-graph $G$. Then,
\[
  \det(\vec{J}) = \sum_{L\in\mathcal{L}(G)} \sigma(L)\, \lambda(L) \;.
\]
\end{lemma}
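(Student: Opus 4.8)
The plan is to chain together the three preparatory results already available---Harary's determinant expansion (Lemma~\ref{lem:detexpansionGI}), the label reconstruction (Lemma~\ref{lem:gamma-reconstruct}), and the signum identity $\sigGI(C)=\sigma(D)$---and then to re-index a double sum over equivalence classes as a single sum over $\mathcal{L}(G)$. Concretely, I would first apply Lemma~\ref{lem:detexpansionGI} to the matrix $\vec{J}$, whose associated graph is precisely the interaction graph $G_I$, to obtain the starting point
\[
  \det(\vec{J}) = \sum_{L\in\mathcal{L}(G_I)} \sigGI(L)\,\gamma(L)\;.
\]

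Next I would rewrite both factors. For the label, Lemma~\ref{lem:gamma-reconstruct} gives $\gamma(L)=\sum_{L'\in\langle L\rangle}\lambda(L')$, replacing the summation over a line-graph of $G_I$ by a summation over its class of species-line-graphs in $G$. For the sign, I would use that the signum is multiplicative over disjoint cycles, $\sigGI(L)=\prod_{C\subseteq L}\sigGI(C)$ and $\sigma(L')=\prod_{C'\subseteq L'}\sigma(C')$ (both following from $\epsilon(\cdot)$ being additive over the cycle components), together with the identity $\sigGI(C)=\sigma(D)$ for every $D\in\langle C\rangle$. Writing $L=C_1+\cdots+C_q$ and any $L'=D_1+\cdots+D_q$ with $D_i\in\langle C_i\rangle$, these combine to $\sigGI(L)=\sigma(L')$ for \emph{every} $L'\in\langle L\rangle$. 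Substituting both identities turns the expansion into the double sum
\[
  \det(\vec{J}) = \sum_{L\in\mathcal{L}(G_I)}\;\sum_{L'\in\langle L\rangle} \sigma(L')\,\lambda(L')\;.
\]

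It then remains to collapse this double sum into the single sum in the statement. Here I would invoke the definition $\mathcal{L}(G)=\bigcup_{L\in\mathcal{L}(G_I)}\langle L\rangle$ and argue that the union is \emph{disjoint}. Each species-line-graph $L'$ determines a unique line-graph of $G_I$ by the projection that collapses every constituent 2-path $(S,R,S')$ of a species-cycle to the edge $(S,S')$ of $G_I$; since $L'$ covers every species-vertex exactly once, the projected cycles cover every vertex of $G_I$ exactly once and assemble into a genuine line-graph $L$ with $L'\in\langle L\rangle$. Because this projection is a well-defined map, no $L'$ can lie in two distinct classes, so the classes $\langle L\rangle$ partition $\mathcal{L}(G)$ and the double sum re-indexes to $\sum_{L'\in\mathcal{L}(G)}\sigma(L')\,\lambda(L')$, as claimed.

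I expect the disjointness verification to be the only genuinely substantive step, since it is what legitimizes the re-indexing; everything else is a formal substitution of previously established lemmas. The point to be careful about is that the projection $L'\mapsto L$ is inverse to class membership, so that no species-line-graph is counted under two different $G_I$-line-graphs, and (using Lemma~\ref{lem:GIandG}) that each class is nonempty since every edge of $G_I$ arises from at least one 2-path in $G$. The sign and label manipulations carry no hidden difficulty beyond the multiplicativity of $\sigGI$ and $\sigma$ over disjoint cycles.
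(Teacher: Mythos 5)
Your proof is correct and follows essentially the same route as the paper's: both reduce the claimed sum over $\mathcal{L}(G)$ to Harary's expansion over $\mathcal{L}(G_I)$ by grouping species-line-graphs into the equivalence classes $\langle L\rangle$ and invoking the signum identity together with Lemma~\ref{lem:gamma-reconstruct}. The only difference is presentational --- the paper states the re-indexing in one line, whereas you additionally verify that the classes $\langle L\rangle$ partition $\mathcal{L}(G)$ (via the projection collapsing 2-paths), a point the paper leaves implicit.
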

\begin{proof}
Applying the definition of $\mathcal{L}(G)$ from Lemma~\ref{lem:equiv-eCL}, we reduce the expression to the one found for $G_I$ in Lemma~\ref{lem:detexpansionGI}:
\[
  \sum_{L\in\mathcal{L}(G)} \sigma(L)\, \lambda(L) 
  = \sum_{L\in\mathcal{L}(G_{I})}\sum_{L'\in\langle L \rangle} \sigGI(L')\, \gamma(L') 
  = \sum_{L'\in\mathcal{L}(G_I)} \sigGI(L')\, \gamma(L')\;,
\]  
where $\sigGI$ and $\gamma$ are again the signum and label function in $G_I$. 
\end{proof}

\begin{example}
\label{ex:lgraphs}
Consider the DSR-graph of Example~\ref{ex:small}, given in Figure~\ref{fig:dsr}. Its species-line-graphs are
\begin{align*}
  L_{1}:& (A,R_{1},B,R_{2},A) \\
  L_{2}:& (A,R_{1},A) \cup (B,R_{2},B) \\
  L_{3}:& (A,R_{1},A) \cup (B,R_{5},B) \\
  L_{4}:& (A,R_{3},A) \cup (B,R_{2},B) \\  
  L_{5}:& (A,R_{3},A) \cup (B,R_{5},B)
\end{align*}
corresponding directly to the five expansion terms
\[
  \det(\vec{J}) \;=\; \underbrace{-\;bd\frac{\partial v_1}{\partial x_A}\frac{\partial v_2}{\partial x_B}}_{L_{1}}  
    \underbrace{+ac\frac{\partial v_1}{\partial x_A}\frac{\partial v_2}{\partial x_B}}_{L_{2}}
    \underbrace{+a\frac{\partial v_1}{\partial x_A}\frac{\partial v_5}{\partial x_B}}_{L_{3}}
    \underbrace{+c\frac{\partial v_3}{\partial x_A}\frac{\partial v_2}{\partial x_B}}_{L_{4}}
    \underbrace{+\frac{\partial v_3}{\partial x_A}\frac{\partial v_5}{\partial x_B}}_{L_{5}}\;.
\]   

\end{example}
  
In contrast to the determinant expansion from $G_I$, the sum-of-products representation of labels of $G$ yields a direct correspondence of species-line-graphs and expansion terms. Terms have the same partial derivatives if their species-line-graphs have identical substrate-reaction edges. This observation motivates to identify \emph{compatible} line-graphs in $G$ and determine their overall contribution to the expansion from their stoichiometric labels.

\begin{proposition}[compatibility]
Let $H,H'$ be two sub-graphs of $G$. The relation
\[
  H\sim H' \iff H,H'\text{ \emph{compatible} } :\iff E_{SR}(H) = E_{SR}(H')
\]
defines an equivalence relation. We write
\[
  [H] := \{H\subseteq G \,|\, H'\sim H\}
\]
for the equivalence class of a sub-graph $H$. In particular, $\sim$ partitions $\mathcal{L}(G)$ into equivalence classes in the quotient set $\mathcal{L}(G)/\sim$
\end{proposition}
\begin{proof}
Reflexivity, symmetry, and transitivity of $\sim$ are obvious.
\end{proof}

As an example, $L_{1}$ and $L_{2}$ in Example~\ref{ex:lgraphs} are the only non-trivially compatible species-line-graphs. The notion of compatibility suggests a strategy to determine if $\det(\vec{J})$ vanishes by summing over each individual compatibility class of $\mathcal{L}(G)$. If all classes are either non-negative or non-positive, the sign of the determinant is independent of the state $\vec{x}$. 

\begin{definition}[stoichiometric term of equivalence class]
Let $G$ be a DSR-graph and consider a compatibility class $[L]\in\mathcal{L}(G)/\sim$. The term
\[
  \Lambda([L]) := \sum_{L'\in [L]} \sigma(L')\lambda_{RS}(L')
\]
is called the \emph{stoichiometric term} of $[L]$. It is a constant independent of $\vec{x}$.
\end{definition}

The stoichiometric term can be computed from the stoichiometric matrix alone.
\begin{lemma}[computing stoichiometric terms]
\label{lem:computeW}
Fix a species-line-graph $L\in\mathcal{L}(G)$. Let $r_{j}$ be the index of the unique reaction with substrate $S_{j}$ in $L$. Define the $n\times n$ matrix $\vec{W}_{L}$ by
\[
  W_{i,j} = \begin{cases} 
    1,&\text{if } (S_j,R_{r_{j}})\in E_{SR}(L) \\
    0,&\text{else}\;,
    \end{cases}
\]    
and let $\vec{N}_{L}$ be the $n\times n$ stoichiometric matrix with columns not in $r_{1},\dots,r_{n}$ removed.
Then,
\[
  \Lambda([L]) = \det\left(\vec{N}_{L}\cdot\vec{W}_{L}\right) = \pm \det\left(\vec{N}_{L}\right)
\]
with the sign determined uniquely by $\vec{W}_{L}$.
\end{lemma}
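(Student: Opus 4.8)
The plan is to fix a compatibility class $[L]$, set up a bijection between its species-line-graphs and a set of permutations of $\{1,\dots,n\}$, and then recognize the signed stoichiometric sum $\Lambda([L])$ as a single Leibniz determinant. First I would note that freezing the class $[L]$ freezes the rate edges $E_{SR}(L)=\{(S_j,R_{r_j})\,:\,j=1,\dots,n\}$, hence the assignment $j\mapsto r_j$ of each species to the unique reaction of which it is a substrate in $L$. Since the species-cycles of a species-line-graph are vertex-disjoint and within each cycle the rate edges set up a bijection between its species vertices and its reaction vertices, the indices $r_1,\dots,r_n$ are pairwise distinct; they are exactly the $n$ reaction columns retained in $\vec{N}_L$.

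With the rate edges frozen, every $L'\in[L]$ is determined by its stoichiometric edges, which must take the form $(R_{r_j},S_{\pi(j)})$ for some map $\pi$ on $\{1,\dots,n\}$. Because each reaction vertex $R_{r_j}$ emits exactly one stoichiometric edge and each species vertex is covered exactly once, $\pi$ is a permutation, and the edge $(R_{r_j},S_{\pi(j)})$ exists in $G$ precisely when $N_{\pi(j),r_j}\neq 0$. Conversely, any permutation $\pi$ with all these entries nonzero produces vertex-disjoint simple species-cycles (one per cycle of $\pi$, with distinct reactions by injectivity of $j\mapsto r_j$) covering every species, i.e. a genuine element of $[L]$; a fixed point gives the self-type species-cycle $(S_j,R_{r_j},S_j)$, whose edge exists since $N_{j,r_j}<0$. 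This yields the bijection $L'\leftrightarrow\pi$ between $[L]$ and $\{\pi:N_{\pi(j),r_j}\neq0\ \forall j\}$, under which $\lambda_{RS}(L')=\prod_{j}N_{\pi(j),r_j}$.

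Next I would match the signs. Each cycle of $\pi$ of length $k$ gives a species-cycle on $k$ species; by the preceding signum-of-sub-graph lemma this cycle is counted in $\epsilon(L')$ exactly when $k$ is even, so its contribution to $\sigma$ is $-1$ iff $k$ is even, which is precisely the permutation sign $(-1)^{k-1}$ of a $k$-cycle. Taking the product over all cycles of $\pi$ gives $\sigma(L')=\sign(\pi)$. Substituting the two identities into the definition of $\Lambda([L])$ collapses the sum over $[L]$ into a full Leibniz expansion,
\[
  \Lambda([L]) \;=\; \sum_{\pi}\sign(\pi)\prod_{j=1}^{n}N_{\pi(j),r_j}
  \;=\; \det(\vec{M}),\qquad M_{i,j}:=N_{i,r_j},
\]
where the permutations ruled out above are exactly those contributing a zero term. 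Since $\vec{M}$ is $\vec{N}_L$ with its columns reordered from ascending reaction-index order into the species order $r_1,\dots,r_n$, encoding that reordering by the permutation matrix $\vec{W}_L$ gives $\vec{M}=\vec{N}_L\vec{W}_L$, whence $\Lambda([L])=\det(\vec{N}_L\vec{W}_L)=\det(\vec{W}_L)\det(\vec{N}_L)=\pm\det(\vec{N}_L)$ with sign $\det(\vec{W}_L)=\pm1$.

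I expect the main obstacle to be the sign bookkeeping rather than the combinatorics. One must verify cleanly both that $\sigma(L')=\sign(\pi)$ (an identity linking even-length species-cycles to permutation parity) and that the $0/1$ matrix $\vec{W}_L$ in the statement is exactly the permutation matrix realizing $\vec{M}=\vec{N}_L\vec{W}_L$, so that the otherwise ambiguous ``$\pm$'' is pinned down uniquely as $\det(\vec{W}_L)$. Establishing the underlying bijection is comparatively routine, once one observes that freezing $E_{SR}$ leaves exactly a permutation's worth of freedom in the admissible stoichiometric edges.
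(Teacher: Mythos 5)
Your proof is correct and takes essentially the same route as the paper: both identify $\Lambda([L])$ with $\det(\vec{N}_L\cdot\vec{W}_L)$ by freezing the rate edges of the class and reading $\vec{W}_L$ as a permutation matrix, the paper by appealing to its determinant-expansion lemma on the reduced graph, you by carrying out the Leibniz expansion and the identity $\sigma(L')=\sign(\pi)$ by hand. The explicit bijection between $[L]$ and permutations, and the sign bookkeeping, are exactly the details the paper's one-line proof delegates to Lemma~\ref{lem:detexpansionGSR}.
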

\begin{proof}
The matrix $\vec{N}_{L}\cdot\vec{W}_{L}$ corresponds to a graph with only rate edges corresponding to substrate-reaction pairs of $[L]$ and all other reaction vertices removed; its determinant sums the contributions of all remaining species-line-graphs in that graph. Moreover, $\vec{W}_{L}$ is simply a permutation matrix whose determinant is thus $\pm 1$.
\end{proof}

\begin{theorem}[determinant expansion by compatibility classes]
\label{thm:detsign}
Fix any NAC reaction system and let $\vec{J}$ be its Jacobian matrix and $G$ its DSR-graph. Then,
\[
  \det(\vec{J}) = \sum_{[L]\in\mathcal{L}(G)/\sim} \;\Lambda([L])\cdot\lambda_{SR}(L)\;.
\]
The determinant is non-negative for all $\vec{x}>\vec{0}$ if
\[
  \Lambda([L])\geq 0 \text{ for all } [L]\in\mathcal{L}(G)/\sim
\]
and positive if in addition $\Lambda([L])> 0$ for at least one compatibility class. Similar conditions hold for non-positivity (negativity) of the determinant.
\end{theorem}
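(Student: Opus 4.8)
The plan is to obtain the stated identity by re-organizing the determinant expansion of Lemma~\ref{lem:detexpansionGSR} according to the compatibility classes, and then to read off the sign conditions from the positivity of the rate labels. Lemma~\ref{lem:detexpansionGSR} already gives $\det(\vec{J}) = \sum_{L\in\mathcal{L}(G)} \sigma(L)\,\lambda(L)$, so the only remaining work is bookkeeping on this sum.

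The first step is to factor the label of each species-line-graph. Every species-cycle alternates between species and reaction vertices, so its edge-set is the disjoint union of its rate edges and its stoichiometric edges; consequently $\lambda(L) = \lambda_{SR}(L)\cdot\lambda_{RS}(L)$ for each $L\in\mathcal{L}(G)$. Since $\sim$ partitions $\mathcal{L}(G)$, I would then split the sum into an outer sum over classes $[L]\in\mathcal{L}(G)/\sim$ and an inner sum over members $L'\in[L]$. By the definition of compatibility, all members of a class share the same rate-edge set $E_{SR}$, hence $\lambda_{SR}(L')=\lambda_{SR}(L)$ is constant on the class; pulling this common factor out of the inner sum leaves $\sum_{L'\in[L]}\sigma(L')\,\lambda_{RS}(L')$, which is precisely $\Lambda([L])$. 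This yields $\det(\vec{J}) = \sum_{[L]}\Lambda([L])\cdot\lambda_{SR}(L)$, as claimed.

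The sign conditions then follow directly. Because every $L\in\mathcal{L}(G)$ is an actual species-line-graph, all of its rate edges exist and therefore $\lambda_{SR}(L)(\vec{x})>0$ for all $\vec{x}>\vec{0}$ by Definition~\ref{def:dsr}. Hence, if $\Lambda([L])\geq 0$ for every class, each summand $\Lambda([L])\cdot\lambda_{SR}(L)$ is non-negative and so is $\det(\vec{J})$; if in addition $\Lambda([L])>0$ for at least one class, that summand is strictly positive and the determinant is positive. The non-positive and negative statements follow by the same argument with reversed inequalities. I do not anticipate a genuine obstacle here: the only points needing care are the factorization $\lambda(L)=\lambda_{SR}(L)\cdot\lambda_{RS}(L)$ and the well-definedness of $\lambda_{SR}$ on a compatibility class, both of which are structural consequences of the bipartite, alternating nature of species-cycles rather than computations.
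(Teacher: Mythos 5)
Your proposal is correct and follows essentially the same route as the paper: partition the expansion of Lemma~\ref{lem:detexpansionGSR} into compatibility classes, pull out the common rate label $\lambda_{SR}$, and conclude from its strict positivity on $\vec{x}>\vec{0}$. You merely make explicit the factorization $\lambda(L)=\lambda_{SR}(L)\cdot\lambda_{RS}(L)$ and the well-definedness of $\lambda_{SR}$ on a class, which the paper leaves implicit.
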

\begin{proof}
Because $\sim$ is an equivalence relation on $\mathcal{L}(G)$, we can partition the sum of Lemma~\ref{lem:detexpansionGSR} by each equivalence class. Note that the rate label is identical and non-negative for all members of a class. If all stoichiometric terms are non-negative, then so is the sum.
\end{proof}

As a consequence, we can restrict our attention to finding conditions that establish non-negativity (or non-positivity) of the stoichiometric term for each compatibility class. For reasons of convenience that will become apparent shortly, we will focus our attention on $G(\vec{-J})$, which is found by inverting signs of stoichiometric edges. With $\det(\vec{J})=(-1)^n\det(\vec{-J})$, the determinant expansion in $G(\vec{J})$ is obviously non-zero if and only if the expansion in $G(\vec{-J})$ is. We start by giving a sufficient condition to find a positive expansion term in open networks.

\begin{lemma}[existence of positive term in open networks]
\label{lem:Lout}
Consider a reaction network and assume that there is an inflow reaction $\emptyset\to S_i$ and an outflow reaction $R_i: S_i\to\emptyset$ for each species $S_i$, $1\leq i\leq n$. Consider the DSR-graph $G(-\vec{J})$ and let $L_{\text{out}}$ be the species-line-graph
\[
  L_{\text{out}} := \bigcup_{i=1}^n\;(S_i,R_i,S_i) \;,
\]
of 2-paths from each species to itself via its outflow reaction. Then,
\begin{enumerate}
\item $[L_{\text{out}}] = \{L_{\text{out}}\} $
\item $\Lambda([L_{\text{out}}]) > 0$
\item $\Lambda([L_{\text{out}}]) \cdot \lambda_{SR}(L_{\text{out}}) > 0$ for all $\vec{x}>\vec{0}$
\end{enumerate}
\end{lemma}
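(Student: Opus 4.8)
The plan is to dispatch the three claims in sequence, letting the combinatorial structure of the boundary reactions carry the argument and reducing claims~2 and~3 to direct evaluation. The only step that requires genuine reasoning is claim~1; everything else is bookkeeping once that is in place.

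First I would record the two defining features of the boundary reactions in $G(-\vec{J})$. For the outflow reaction $R_i\colon S_i\to\emptyset$ with rate $v_i=k_i x_i$, the only nonzero stoichiometric entry in its column is $N_{i,i}=-1$, so the \emph{unique} stoichiometric edge leaving $R_i$ is $(R_i,S_i)$, carrying label $-N_{i,i}=+1$ in $G(-\vec{J})$; moreover $\partial v_i/\partial x_i=k_i>0$ while $\partial v_i/\partial x_j\equiv 0$ for $j\neq i$. The inflow reactions $\emptyset\to S_i$ carry a constant rate and therefore contribute no rate edges at all; since a species-cycle can enter a reaction vertex only through an incoming rate edge, inflow reactions can never appear in a species-cycle. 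These remarks also confirm that $L_{\text{out}}$ is a bona fide species-line-graph: the $n$ two-paths $(S_i,R_i,S_i)$ are pairwise vertex-disjoint and cover each species vertex exactly once.

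The heart of the proof is claim~1, that $[L_{\text{out}}]$ is a singleton. By the definition of $\sim$, any $L'\in[L_{\text{out}}]$ satisfies $E_{SR}(L')=E_{SR}(L_{\text{out}})=\{(S_i,R_i)\}_{i=1}^n$. In the bipartite DSR-graph every reaction vertex occurring in a species-cycle is entered by a rate edge (species $\to$ reaction) and left by a stoichiometric edge (reaction $\to$ species), and in a simple cycle each such vertex has exactly one incoming and one outgoing edge. Hence the reaction vertices appearing in $L'$ are exactly $R_1,\dots,R_n$, each entered via $(S_i,R_i)$. Since the only stoichiometric edge available to leave $R_i$ is $(R_i,S_i)$, the cycle through $R_i$ is forced to close up as $(S_i,R_i,S_i)$. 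Collecting these over all $i$ yields $L'=L_{\text{out}}$, so $[L_{\text{out}}]=\{L_{\text{out}}\}$.

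Claims~2 and~3 then follow by computation. Because the class is a singleton, $\Lambda([L_{\text{out}}])=\sigma(L_{\text{out}})\,\lambda_{RS}(L_{\text{out}})$. Each constituent cycle $(S_i,R_i,S_i)$ has a single species vertex, so $\epsilon(L_{\text{out}})=0$ and $\sigma(L_{\text{out}})=1$, while
\[
  \lambda_{RS}(L_{\text{out}})=\prod_{i=1}^n(-N_{i,i})=\prod_{i=1}^n 1 = 1 ,
\]
giving $\Lambda([L_{\text{out}}])=1>0$. For claim~3 the rate label $\lambda_{SR}(L_{\text{out}})=\prod_{i=1}^n \partial v_i/\partial x_i$ is a product of strictly positive functions on $\vec{x}>\vec{0}$, which is exactly the positivity of $\lambda_{SR}$ guaranteed by Definition~\ref{def:dsr}; hence $\Lambda([L_{\text{out}}])\cdot\lambda_{SR}(L_{\text{out}})>0$ for all $\vec{x}>\vec{0}$. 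The main obstacle, such as it is, is isolating the right structural fact for claim~1, namely that compatibility fixes only the rate edges whereas the \emph{uniqueness} of each outflow reaction's outgoing stoichiometric edge is what pins the stoichiometric edges down as well.
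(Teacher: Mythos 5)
Your proof is correct and follows essentially the same route as the paper's: $L_{\text{out}}$ is a valid species-line-graph because each $S_i$ is a substrate of its outflow reaction, the class is a singleton because the outflow reaction has a unique outgoing stoichiometric edge, $\sigma=+1$ since each cycle has one species vertex, and the stoichiometric and rate labels are positive in $G(-\vec{J})$ for $\vec{x}>\vec{0}$. Your argument for claim~1 is merely a more explicit version of the paper's one-line observation that no other species-line-graph can use the same substrate-reaction pairs.
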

\begin{proof}
First note that $S_i$ is a substrate to $R_i$, so $((S_i,R_i),\,(R_i,S_i))\in E\times E$ for each species $S_i$. Thus, $L_{\text{out}}$ is indeed a species-line-graph. Since there is no other edge out of an outflow reaction, no other species-line-graph can use the same substrate-reaction pairs and thus the compatibility class has a single element. Each cycle in $L_{\text{out}}$ has exactly one species-vertex and thus $\sigma(L_{\text{out}})=+1$. Further, its stoichiometric edge is positive in $G(-\vec{J})$ and thus the stoichiometric term of $[L_{\text{out}}]$ is also. For positive concentrations, the outflow rate changes are positive, proving the last claim. 
\end{proof}

Next, we show that species-line-graphs that either contain both the forward- and backward reaction of a reversible reaction always lead to a zero stoichiometric term for their compatibility class. This result also gives a retrospective justification for splitting reversible reactions.
\begin{lemma}[zero-contribution of reversible reaction]
\label{lem:fwbw}
Let $L\in\mathcal{L}(G)$ be a species-line-graph. Let $R_f, R_b$ be the forward and backward reaction of a reversible reaction and assume $R_f,R_b\in V_R(L)$. Then,
\[
  \Lambda([L]) \equiv 0\;.
\]
\end{lemma}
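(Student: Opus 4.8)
The plan is to invoke Lemma~\ref{lem:computeW} to rewrite the stoichiometric term as (plus or minus) a determinant of a submatrix of $\vec{N}$, and then to observe that the forward and backward reactions contribute two linearly dependent columns to that submatrix, forcing the determinant—and hence $\Lambda([L])$—to vanish. The whole argument is therefore a reduction of a graph-theoretic statement to an elementary rank deficiency of the stoichiometric matrix.

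First I would fix the species-line-graph $L$ and recall its combinatorial structure: since the constituent species-cycles are vertex-disjoint simple cycles covering every species-vertex exactly once, each species $S_j$ emits a single rate edge $(S_j,R_{r_j})$, and each reaction-vertex occurring in $L$ receives exactly one such edge, i.e.\ has a unique substrate in $L$. Consequently the indices $r_1,\dots,r_n$ are pairwise distinct, the matrix $\vec{N}_{L}$ of Lemma~\ref{lem:computeW} is a genuine $n\times n$ submatrix of $\vec{N}$, and that lemma gives $\Lambda([L]) = \pm\det(\vec{N}_{L})$. Because the hypothesis $R_f,R_b\in V_R(L)$ says both reactions appear in $L$, and each appearing reaction has a substrate in $L$, both indices $f$ and $b$ occur among $r_1,\dots,r_n$; hence the columns $\vec{N}_{\cdot,f}$ and $\vec{N}_{\cdot,b}$ are both columns of $\vec{N}_{L}$.

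The key step is then to use the definition of splitting a reversible reaction: if $R_f$ is the half-reaction $\sum_i y_{i}S_i\to\sum_i y'_{i}S_i$, then $R_b$ is $\sum_i y'_{i}S_i\to\sum_i y_{i}S_i$, so $N_{i,b}=y_{i}-y'_{i}=-N_{i,f}$ for every species $S_i$. Thus the two columns of $\vec{N}_{L}$ indexed by $f$ and $b$ are exact negatives of one another, $\vec{N}_{L}$ is rank-deficient, $\det(\vec{N}_{L})=0$, and therefore $\Lambda([L])=\pm\det(\vec{N}_{L})=0$. I expect the only genuine obstacle to be the bookkeeping of the first paragraph—confirming that $R_f$ and $R_b$ really do contribute \emph{two distinct} columns to $\vec{N}_{L}$, which relies on the vertex-disjointness of the species-cycles and on each reaction-vertex in a line-graph having a single incoming rate edge; once this is secured, the opposite-stoichiometry observation makes the conclusion immediate.
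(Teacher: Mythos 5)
Your proof is correct, but it takes a different route from the paper. The paper argues combinatorially, by a sign-reversing pairing on the compatibility class: for each $L'\in[L]$ it swaps the two stoichiometric edges $(R_f,S_f),(R_b,S_b)$ for $(R_f,S_b),(R_b,S_f)$ (which exist precisely because $R_f,R_b$ are the two halves of one reversible reaction); this either splits one species-cycle into two or merges two into one, so $\sigma$ flips while $\lambda_{RS}$ is preserved since $\lambda((R_f,S_f))=-\lambda((R_b,S_f))$ and $\lambda((R_b,S_b))=-\lambda((R_f,S_b))$, and the resulting involution on $[L]$ cancels the sum term by term. You instead reduce everything to Lemma~\ref{lem:computeW} and observe that $\vec{N}_L$ contains the two columns $\vec{N}_{\cdot,f}=-\vec{N}_{\cdot,b}$, so $\det(\vec{N}_L)=0$; your bookkeeping that the $r_j$ are pairwise distinct (each reaction vertex of a species-line-graph has in-degree one, the cycles are vertex-disjoint) is exactly what is needed for both reactions to contribute distinct columns, and it is sound. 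Your argument is shorter and makes the cancellation transparent as a rank deficiency of the stoichiometric matrix, but it leans on Lemma~\ref{lem:computeW}, whose proof in the paper is itself only sketched; the paper's explicit edge-swap is self-contained and, perhaps more importantly, exhibits the cancellation mechanism (splitting/merging compatible cycles) that recurs in Lemma~\ref{lem:existLD} and Proposition~\ref{prop:cancel}, which your linear-algebra shortcut hides. Both are valid; in effect your column dependence is the algebraic shadow of the paper's bijection.
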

\begin{proof}
The construction in this proof is illustrated in Fig.~\ref{fig:proofillustration} (left). Denote by $S_f, S_r\in V_S(L)$ the species with $(R_f, S_f), (R_b,S_b)\in E_{RS}(L)$. Because $R_f, R_b$ constitute one reversible reaction, the edges $(R_f, S_b), (R_b, S_f)$ exist in the graph $G$. Construct a sub-graph $L'$ by replacing $(R_f, S_f), (R_b,S_b)$ by $(R_f, S_b), (R_b, S_f)$. If $R_f, R_b$ are contained in one species-cycle $C_3\in L$, they are now contained in two different cycles $C_1, C_2\in L'$ (or vice-versa). Thus, $L'$ is a species-line-graph and $L'\sim L$. Moreover, $\lambda((R_f, S_f))=-\lambda((R_b, S_f))$ and $\lambda((R_b, S_b))=-\lambda((R_f, S_b))$ and thus $\lambda(L)=\lambda(L')$. However, let $n_i:=|E(C_i)|$, then
\[
  \sigma(C_1)\sigma(C_2)=(-1)^{n_1-1}\,(-1)^{n_2-1}=(-1)^{n_3-2}=-\sigma(C_3)
\]
and thus $\sigma(L')\lambda(L')=-\sigma(L)\lambda(L)$. This construction gives a bijection between species-line-graphs $L$ and $L'$ and thus $\Lambda([L]) \equiv 0$.
\end{proof}

\begin{figure}[htbp]
\begin{center}
\includegraphics[width=\textwidth]{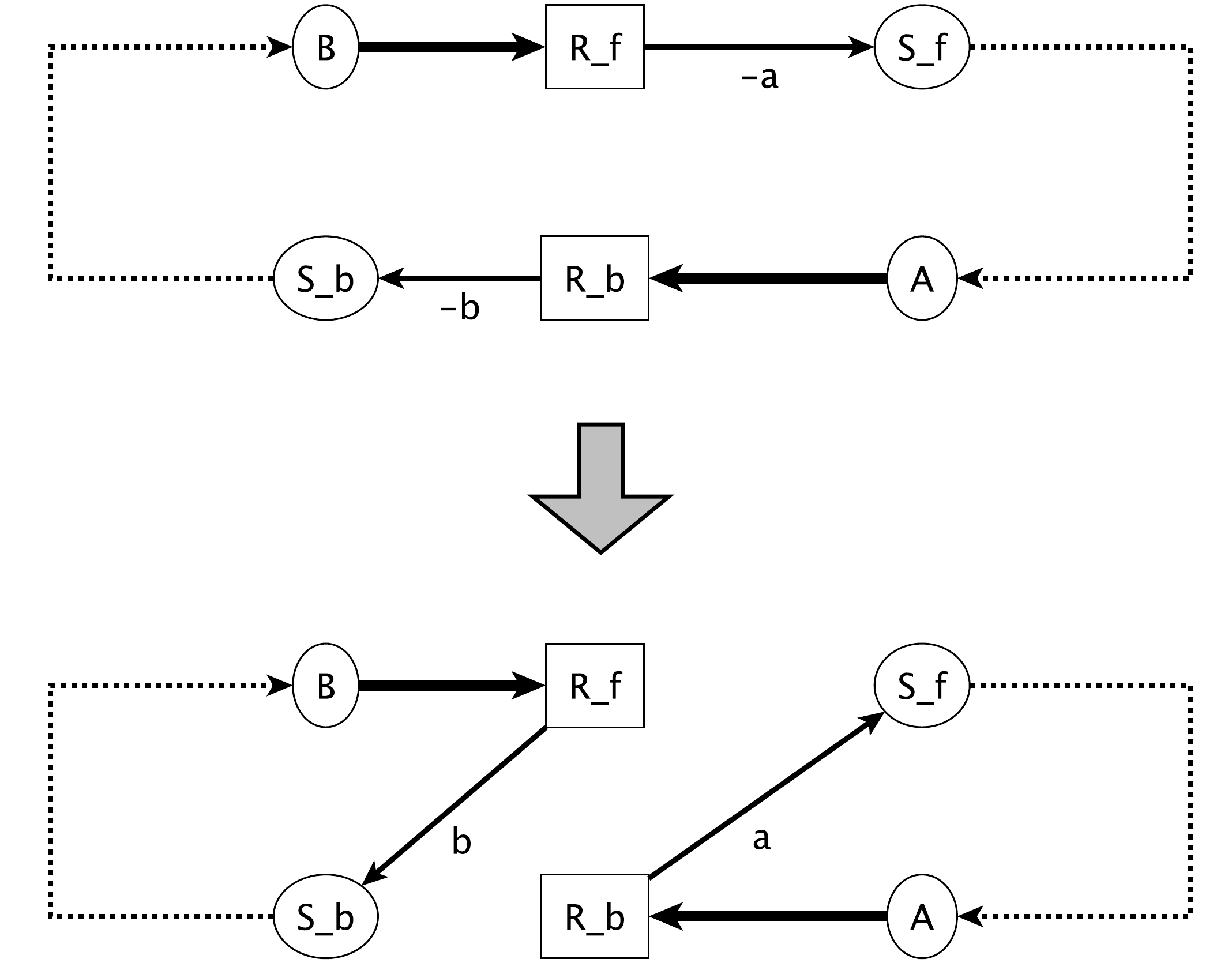}
\caption{Illustration of construction for Lemma~\ref{lem:fwbw}. A cycle containing the forward- and backward reaction can be split into two cycles, leading to a compatible line-graph with same absolute label, but opposite sign. Edge-labels denote stoichometric label. Dotted eges denote arbitrary paths through the graph, bold edges denote fixed substrate-reaction edges defining the compatibility class.}
\label{fig:proofillustration}
\end{center}
\end{figure}

\begin{figure}[htbp]
\begin{center}
\includegraphics[width=\textwidth]{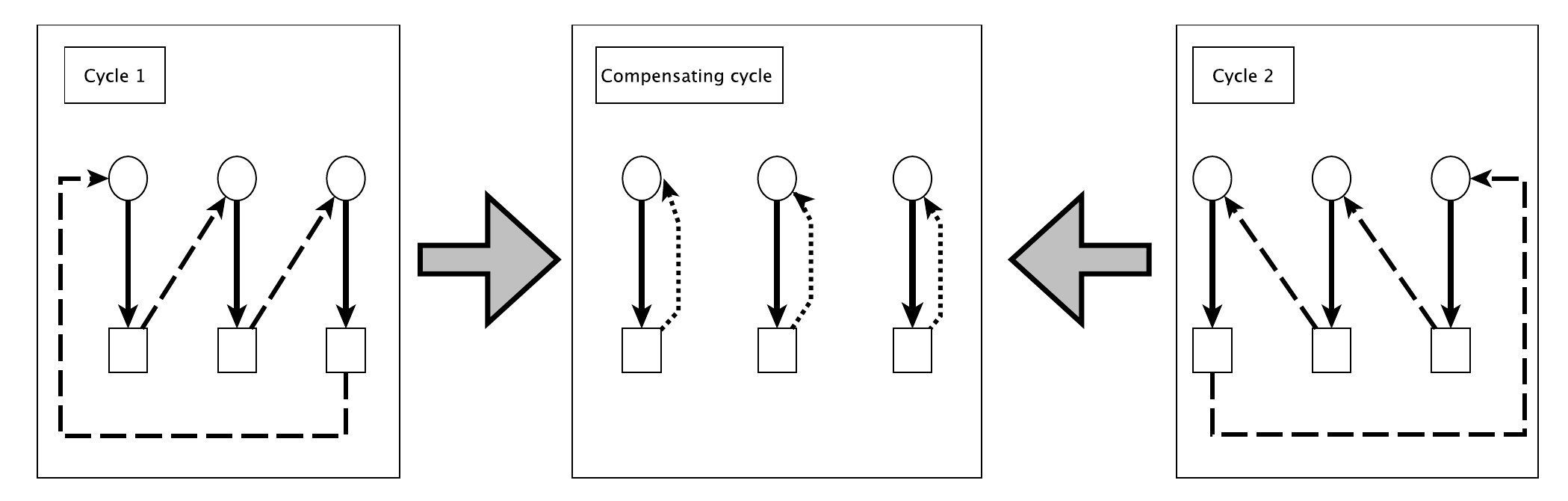}
\caption{Illustration of compensating cycles. For a cycle Cycle 1 ($C$), a compatible compensating cycle (center, $L_C$) always exists, whose stoichiometries might dominate. A second compatible cycle Cycle 2 may however yield the same compensating cycle. Round (rectangular) nodes denote species (reaction) vertices.}
\label{fig:compcycle}
\end{center}
\end{figure}

\begin{lemma}[existence of positive compatible species-line-graphs]
\label{lem:existLD}
Let $G(-\vec{J})$ be a DSR-graph and consider a species-line-graph $L$ containing a cycle $C$. Construct a sub-graph $L_C\subseteq G$ by replacing each 2-path $((S,R),(R,S'))$ in $C$ by $((S,R),(R,S))$. Then,
\begin{enumerate}
  \item $\sigma(L_C)\lambda(L_C) > 0$ for all $\vec{x}>\vec{0}$ 
  \item $L_C \sim C$
  \item $(L\backslash C)\cup L_C \sim L$.
\end{enumerate} 
\end{lemma}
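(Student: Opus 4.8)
The plan is to verify the three claims directly from the construction, after first pinning down the precise form of $L_C$. Each 2-path $((S,R),(R,S'))$ of the cycle $C$ has $S$ as a substrate of $R$, since the rate edge $(S,R)$ exists only when $\partial v_R/\partial x_S\not\equiv 0$; the construction keeps $(S,R)$ and replaces the stoichiometric edge $(R,S')$ by $(R,S)$, turning the 2-path into the self-loop $(S,R,S)$. Because $C$ is a simple cycle in the bipartite graph, the species vertices $S$ and reaction vertices $R$ occurring in its 2-paths are pairwise distinct; hence $L_C$ is the vertex-disjoint union of these self-loops and is a genuine species-line-graph on the species set $V_S(C)$. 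Note that each edge $(R,S)$ indeed exists in $G$, because $S$ being a substrate forces $N_{S,R}\neq 0$.

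Second, for Claim~1 I would observe that every cycle in $L_C$ contains exactly one species vertex, so $\epsilon(L_C)=0$ and thus $\sigma(L_C)=+1$. The rate labels $\lambda_{SR}$ are positive for all $\vec{x}>\vec{0}$ by the NAC assumption (as noted in Definition~\ref{def:dsr}). The decisive point is the sign of the stoichiometric labels: in $G(-\vec{J})$ the edge $(R,S)$ carries label $-N_{S,R}$, and since $S$ is a substrate we have $N_{S,R}<0$, so $-N_{S,R}>0$. Hence $\lambda_{RS}(L_C)>0$, and therefore $\sigma(L_C)\lambda(L_C)=\lambda_{SR}(L_C)\,\lambda_{RS}(L_C)>0$ for all $\vec{x}>\vec{0}$.

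Third, Claims~2 and~3 reduce to bookkeeping on rate edges, since compatibility is defined solely through $E_{SR}$. The construction leaves every rate edge $(S,R)$ of $C$ untouched and only rewires stoichiometric edges, so $E_{SR}(L_C)=E_{SR}(C)$, which is exactly $L_C\sim C$, giving Claim~2. For Claim~3 I would compute, using the definition of $\backslash$ and of the disjoint union, $E_{SR}((L\backslash C)\cup L_C)=\bigl(E_{SR}(L)\backslash E_{SR}(C)\bigr)\cup E_{SR}(L_C)=E_{SR}(L)$, where the last equality uses Claim~2 together with the splitting $L=C+(L\backslash C)$ of the line-graph along its disjoint cycles, so that $E_{SR}(L)$ is the disjoint union of $E_{SR}(C)$ and $E_{SR}(L\backslash C)$. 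This yields $(L\backslash C)\cup L_C\sim L$.

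The main thing to be careful about is not the signs but the graph-theoretic bookkeeping: one must confirm that $(L\backslash C)\cup L_C$ is again a species-line-graph, i.e. a set of disjoint simple cycles covering each species vertex exactly once. This holds because the cycles of $L$ are vertex-disjoint, so deleting the vertices of $C$ leaves the remaining cycles intact, while $L_C$ re-covers precisely $V_S(C)$ and re-uses precisely the reaction vertices $V_R(C)$ that were deleted, all by vertex-disjoint self-loops. I expect this verification --- that after deletion and re-insertion no species or reaction vertex is double-covered or left uncovered --- to be the only step requiring genuine care; everything else is immediate from the definitions of $\sigma$, $\lambda$, and $\sim$.
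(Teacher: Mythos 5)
Your proposal is correct and follows essentially the same route as the paper's own proof: existence of the reversed stoichiometric edge $(R,S)$ from $S$ being a substrate, positivity of the resulting one-species self-loops in $G(-\vec{J})$ giving $\sigma(L_C)\lambda(L_C)>0$, and preservation of the substrate--reaction pairs giving both compatibility claims. The extra bookkeeping you supply (explicit sign of $-N_{S,R}$ and the check that $(L\backslash C)\cup L_C$ is again a species-line-graph) only makes explicit what the paper leaves implicit.
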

\begin{proof}
The construction is illustrated in Fig.~\ref{fig:compcycle}. With $(S,R)\in E( C)$, species $S$ is a reactant of $R$. Then, also $(R,S)\in E(G)$ and thus $L_C$ is a proper sub-graph of $G$ with exactly one odd-length cycle $D_i$ from each species-vertex $S_i\in V_S( C)$ to itself. Thus, $\sigma(D_i)=+1$ and $\lambda(D_i)>0$ for all $\vec{x}>\vec{0}$ and $i=1,\dots,n$. These cycles use the same substrate-reaction pairs as $C$ and thus $C\sim L_C$. They also cover each species-vertex in $C$ exactly once, so $L \sim (L\backslash C)\cup L_C$. As a special case, $L_C=C$ if $|V_S( C)|=1$.
\end{proof}

The sign of the contribution of a species-line-graph depends on its constituent species-cycles. Since each sub-graph of a DSR-graph is sign-definite, we can give simple conditions for a species-cycle to be positive or negative by determining the number of substrate-to-substrate 2-paths in the cycle.
\begin{lemma}[condition for sign of species-cycles]
Consider a species-cycle $C$ in a DSR-graph $G(\vec{-J})$. Let $s$ be the number of \emph{substrate-pairs}, that is, of 2-paths $(S,R,S')$ in $C$, such that $S,S'$ are both a substrate of $R$. We call $C$ a \emph{p-cycle} (an \emph{n-cycle}) if $\sigma( C)\lambda( C) > 0$) (resp.\ $<0$). Then,
\begin{align*}
  C \text{ is p-cycle} &\iff s \text{ is even} \\
  C \text{ is n-cycle} &\iff s \text{ is odd}\;.
\end{align*}   
\end{lemma}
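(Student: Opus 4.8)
The plan is to reduce the claim to a single parity count by factoring the quantity $\sigma(C)\lambda(C)$ into two independent pieces. First, by Proposition~\ref{prop:DSRsign-definite} every sub-graph of a DSR-graph is sign-definite, so $\sign(\sigma(C)\lambda(C))$ does not depend on $\vec{x}>\vec{0}$; hence the p-cycle/n-cycle dichotomy is well defined and it suffices to determine this sign once and for all. I would compute it as the product $\sigma(C)\cdot\sign(\lambda(C))$, evaluating each factor as a parity.

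For the label factor I would write $C$ as its $k$ constituent $2$-paths $(S,R,S')$, where $k$ is the number of species-vertices of $C$. Each rate edge carries a strictly positive label $\lambda_{SR}>0$, so $\sign(\lambda(C))=\sign(\lambda_{RS}(C))$ reduces to the product of the signs of the stoichiometric edges. In $G(-\vec{J})$ the label of a stoichiometric edge $(R,S')$ is $-N_{S',R}$, which is positive exactly when $S'$ is a substrate of $R$ and negative exactly when $S'$ is a product. The key observation is that, under NAC kinetics, a rate edge $(S,R)$ exists only if $S$ is a substrate of $R$; hence in every $2$-path the source $S$ is automatically a substrate, and the $2$-path is a substrate-pair precisely when its target $S'$ is a substrate too, i.e. precisely when its stoichiometric edge is positive. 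Consequently the negative stoichiometric edges are exactly the $k-s$ non-substrate-pairs, and $\sign(\lambda(C))=(-1)^{k-s}$.

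For the signum factor I would invoke the signum lemma: a species-cycle with $k$ species-vertices corresponds to a length-$k$ cycle in $G_I$, whence $\sigma(C)=(-1)^{k}$. Multiplying the two factors, the $k$-dependence cancels and $\sign(\sigma(C)\lambda(C))=(-1)^{k}\,(-1)^{k-s}=(-1)^{s}$, so $C$ is a p-cycle iff $s$ is even and an n-cycle iff $s$ is odd, as claimed. The main obstacle is exactly this sign bookkeeping: one must argue cleanly that the source of each $2$-path is always a substrate (so that being a substrate-pair is governed solely by the target and coincides with a positive stoichiometric edge in $G(-\vec{J})$), and one must handle the cyclic signum convention consistently so that the $k$-dependence genuinely cancels and only the parity of $s$ survives. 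I would guard against off-by-one errors by checking the two smallest cases directly — a self-loop ($k=1$) and a two-species cycle ($k=2$) — and by cross-checking the overall convention against the determinant expansion of Lemma~\ref{lem:detexpansionGSR}, where $\sigma$ and $\lambda$ appear in the same combination.
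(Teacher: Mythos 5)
Your factorization of $\sign(\sigma(C)\lambda(C))$ into a signum parity times a stoichiometric-edge parity is sound in structure, and your bookkeeping of the label factor is exactly right: every rate edge $(S,R)$ forces $S$ to be a substrate, so the $k-s$ non-substrate-pairs are precisely the negative stoichiometric edges in $G(-\vec{J})$ and $\sign(\lambda(C))=(-1)^{k-s}$. The paper's own proof is a four-case check over the parities of $k=|V_S(C)|$ and $s$; your argument is the same computation done once in closed form, so in spirit the two coincide.

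The gap is in the signum factor. By the definition $\sigma(C)=(-1)^{\epsilon(C)}$, a single species-cycle with $k$ species-vertices has $\epsilon(C)=1$ if $k$ is even and $0$ if $k$ is odd, i.e.\ $\sigma(C)=(-1)^{k-1}$ --- the sign of a $k$-cycle permutation, as required for the determinant expansion of Lemma~\ref{lem:detexpansionGSR} (compare the factor $(-1)^1$ attached to the $2$-cycle $C_3$ in Example~\ref{ex:smallcont}). You instead take $\sigma(C)=(-1)^k$, quoting the intermediate expression in the proof of the signum lemma, and only with that value does the $k$-dependence cancel to $(-1)^s$. With the value dictated by the definition one gets $(-1)^{k-1}(-1)^{k-s}=(-1)^{s+1}$, the opposite parity. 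The sanity checks you propose but do not carry out would have flagged this: the outflow self-loop $(S_i,R_i,S_i)$ has $s=1$ and satisfies $\sigma\lambda>0$ by Lemma~\ref{lem:Lout} (a p-cycle with odd $s$), while the two-species cycle $L_1$ of Example~\ref{ex:lgraphs} has $s=0$ yet contributes the negative term $-bd\,\frac{\partial v_1}{\partial x_A}\frac{\partial v_2}{\partial x_B}$ (an n-cycle with even $s$). To be fair, the paper's proof makes the identical substitution ($\sigma(C)=+1$ for $|V_S(C)|$ even), so you have reproduced its reasoning faithfully; but as written your derivation rests on a value of $\sigma$ that contradicts its definition, and carrying the computation through consistently yields the parities swapped relative to the statement. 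You should either reconcile this with the definition of $\sigma$ or explicitly flag the discrepancy rather than silently adopt $(-1)^k$.
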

\begin{proof}
We consider the four possible combinations of $V_S( C)$ even/odd and $s$ even/odd. For $V_S( C)$ even and $s$ odd, the number of negative 2-paths is also odd, thus $\sigma( C)=+1$ and $\lambda( C)<0$, as $E( C)$ contains an odd number of negative stoichiometric edges. Thus, the overall contribution of $C$ is negative. The other three cases follow the same reasoning. 
\end{proof}
In~\cite{craciun2006,banaji2010} p-cycles (n-cycles) were called e-cycles (o-cycles).  As a consequence of Lemma~\ref{lem:existLD}, we can give a simple condition when a negative contribution to the determinant expansion is cancelled (see also~\cite{craciun2006,banaji2010}).
\begin{proposition}[dominating term]
\label{prop:cancel}
Let $G(-\vec{J})$ be a DSR-graph and consider a species-line-graph $L$ containing an n-cycle $C$. Let $L'=(L\backslash C)\cup L_C$. Then,
\[
  \sigma(L)\lambda(L) + \sigma(L')\lambda(L') \geq 0 \iff \lambda_{RS}(L_C)\geq|\lambda_{RS}( C)|\;.
\]
We then say that $L_C$ \emph{dominates} $C$. We call $C$ a \emph{stoichiometric} or \emph{s-cycle}, if $\lambda_{RS}( L_C)=|\lambda_{RS}( C)|$.   
\end{proposition}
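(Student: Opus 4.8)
The plan is to exploit the fact that $L$ and $L'$ are identical outside the cycle $C$: we have $L = C + (L\backslash C)$ and, by part~3 of Lemma~\ref{lem:existLD}, $L' = L_C + (L\backslash C)$, both written as disjoint unions of simple species-cycles sharing the common sub-line-graph $L\backslash C$. Since $\epsilon(\cdot)$ is additive over disjoint cycle unions, the signum is multiplicative, $\sigma(L)=\sigma(C)\,\sigma(L\backslash C)$ and $\sigma(L')=\sigma(L_C)\,\sigma(L\backslash C)$, and $\lambda$ is multiplicative over disjoint sub-graphs. First I would therefore factor out the common contribution $P := \sigma(L\backslash C)\,\lambda(L\backslash C)$ to obtain
\[
  \sigma(L)\lambda(L) + \sigma(L')\lambda(L') = P\,\bigl[\sigma(C)\lambda(C) + \sigma(L_C)\lambda(L_C)\bigr]\;.
\]

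Next I would collapse the bracket to a comparison of stoichiometric labels. By part~2 of Lemma~\ref{lem:existLD}, $C\sim L_C$, so both carry the same rate edges and hence the same positive rate label $\nu := \lambda_{SR}(C) = \lambda_{SR}(L_C) > 0$; factoring this out leaves $\nu\,[\sigma(C)\lambda_{RS}(C) + \sigma(L_C)\lambda_{RS}(L_C)]$. The two cycle terms are then read off from the earlier results: since $C$ is an n-cycle, $\sigma(C)\lambda(C) < 0$ and hence $\sigma(C)\lambda_{RS}(C) = -|\lambda_{RS}(C)|$ (the rate factor being positive), while by part~1 of Lemma~\ref{lem:existLD} and its proof $L_C$ consists of odd-length self-cycles, so $\sigma(L_C)=+1$ and $\lambda_{RS}(L_C)>0$, giving $\sigma(L_C)\lambda_{RS}(L_C)=\lambda_{RS}(L_C)$. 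The bracket thus equals $\nu\,[\lambda_{RS}(L_C) - |\lambda_{RS}(C)|]$, and the whole sum becomes
\[
  \sigma(L)\lambda(L) + \sigma(L')\lambda(L') = P\,\nu\,\bigl[\lambda_{RS}(L_C) - |\lambda_{RS}(C)|\bigr]\;.
\]
By Proposition~\ref{prop:DSRsign-definite} every factor here has a sign independent of $\vec{x}$, so the inequality is well-posed for all $\vec{x}>\vec{0}$ simultaneously.

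Since $\nu>0$, the sign of the sum is the product of $\sign(P)$ and $\sign(\lambda_{RS}(L_C) - |\lambda_{RS}(C)|)$, and the claimed equivalence $\sigma(L)\lambda(L) + \sigma(L')\lambda(L') \geq 0 \iff \lambda_{RS}(L_C) \geq |\lambda_{RS}(C)|$ follows at once \emph{provided} the common factor $P = \sigma(L\backslash C)\lambda(L\backslash C)$ is positive. I expect this positivity to be the one genuine obstacle: it is automatic when $L\backslash C$ is empty (so $P=1$), but it can fail in general, for instance if $L\backslash C$ itself contains an n-cycle, in which case the direction of the equivalence would flip through $\sign(P)$. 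The resolution I would pursue is that in the intended use---grouping the species-line-graphs of a compatibility class to certify the sign of $\det(\vec J)$ via Lemma~\ref{lem:detexpansionGSR}---the cycle $C$ is isolated against an otherwise positive remainder, so $L\backslash C$ contributes a positive product and $P>0$ holds. I would therefore record this positivity as the standing hypothesis under which the notion that $L_C$ \emph{dominates} $C$ is meaningful, rather than leaving the sign of the remainder implicit.
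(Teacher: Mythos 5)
The paper states Proposition~\ref{prop:cancel} without any proof, so there is no argument of record to compare against; your factorization is the natural one and is evidently what the surrounding text presupposes. Your computation is correct: using multiplicativity of $\sigma$ and $\lambda$ over the disjoint decompositions $L = C + (L\backslash C)$ and $L' = L_C + (L\backslash C)$, together with $\lambda_{SR}(C)=\lambda_{SR}(L_C)>0$ from compatibility and the signs $\sigma(C)\lambda(C)<0$ (definition of n-cycle) and $\sigma(L_C)\lambda(L_C)>0$ (Lemma~\ref{lem:existLD}), the sum collapses to $\sigma(L\backslash C)\lambda(L\backslash C)\cdot\lambda_{SR}(C)\cdot\bigl[\lambda_{RS}(L_C)-|\lambda_{RS}(C)|\bigr]$. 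Your observation that the stated equivalence therefore hinges on the common factor $\sigma(L\backslash C)\lambda(L\backslash C)$ being positive is a genuine and worthwhile catch rather than a defect of your proof: if $L\backslash C$ contains an odd number of n-cycles the equivalence reverses, so the proposition as literally stated needs this as a standing hypothesis. It is vacuously satisfied when $L\backslash C=\emptyset$, and it does hold in the intended application, where $L$ contributes negatively to $\Lambda([L])$ and hence contains an odd number of n-cycles, of which $C$ is one, leaving an even number (hence a positive product) in $L\backslash C$. Recording that hypothesis explicitly, as you propose, is the right fix and should be incorporated into the statement.
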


The previous Lemma~\ref{lem:existLD} and Proposition~\ref{prop:cancel} hold the key to determine, for all $\vec{x}>\vec{0}$, if a determinant vanishes or not. Clearly, for a species-line-graph $L$ to give a negative contribution to the stoichiometric term $\Lambda([L])$ of its compatibility class, it contains an odd number of n-cycles. Replacing one of these n-cycles $C$ by $L_C$ leads to a new compatible species-line-graph with positive contribution that compensates the negative. However, there might be a second species-line-graph in the same class, also with negative contribution that contains another negative cycle that leads to the same compensating species-line-graph. Thus, the compensating species-line-graph needs to dominate \emph{the sum} of all these contributions. Next, we give conditions when such situation arises and provide a simple sufficient condition that excludes it.

\begin{theorem}[species-reaction intersection of species-cycles]
Consider a species-line-graphs $L$ in $G(\vec{-J})$ and assume that in every compatible species-line-graph, each n-cycle $C$ is dominated by $L_C$. Further assume that
\[
  \Lambda([L]) < 0\;.
\]  
Then, there are $L_1,L_2\in [L]$ and two non-disjoint n-cycles $C_1\subseteq L_1$, $C_2\subseteq L_2$ such that all paths in $C_1\cap C_2$ start in $V_S$ and end in $V_R$.
\end{theorem}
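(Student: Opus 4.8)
The plan is to work entirely inside the single compatibility class $[L]$ and to reduce the statement to a controlled ``compensation'' between its members. Since the rate label $\lambda_{SR}$ is common and positive on $[L]$, the sign of the class's contribution to the expansion of Theorem~\ref{thm:detsign} equals $\sign(\Lambda([L]))$, so the hypothesis says that $\Lambda([L])=\sum_{L'\in[L]}\sigma(L')\lambda_{RS}(L')<0$. A member contributes negatively exactly when it contains an odd number of n-cycles, so there is at least one negative $L'$, and for every n-cycle $C\subseteq L'$ Lemma~\ref{lem:existLD} yields a compatible, positive member $\Phi(L',C):=(L'\setminus C)\cup L_C$. By Proposition~\ref{prop:cancel} and the domination hypothesis, $\sigma(L')\lambda_{RS}(L')+\sigma(\Phi(L',C))\lambda_{RS}(\Phi(L',C))\ge 0$; hence if $\Phi$ were injective on the negative members, pairing each negative with its distinct compensator would write $\Lambda([L])$ as a sum of non-negative pairs plus surplus positive terms, contradicting $\Lambda([L])<0$. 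The whole argument therefore turns on the multiplicity of $\Phi$.

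Before analysing that multiplicity I would record a structural fact that makes the asserted path property automatic, so that the real content is mere non-disjointness. Every member of $[L]$ uses the same $n$ rate edges $E_{SR}(L)$, one leaving each species, so there is a fixed bijection $\phi\colon V_S\to V_R(L)$ taking a species to the reaction it feeds; within any single cycle a reaction is present iff its unique $\phi$-preimage species is present, since its only incoming rate edge comes from that species. Consequently, for any two cycles $C_1,C_2$ drawn from members of $[L]$, a reaction lies in $C_1\cap C_2$ only together with its incoming rate edge, and the target of a shared stoichiometric edge is a shared species that therefore also contributes its outgoing rate edge to the intersection. Counting in- and out-degrees, $C_1\cap C_2$ is a disjoint union of simple paths in which species have out-degree one and reactions have in-degree one; every maximal path thus starts at a species with a rate edge and ends at a reaction with a rate edge. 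Hence the path condition in the conclusion is automatic, and it only remains to exhibit two non-disjoint n-cycles $C_1\subseteq L_1$, $C_2\subseteq L_2$.

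To force such a pair I argue by contradiction: assume no two realizable n-cycles are non-disjoint, i.e. all n-cycle supports in members of $[L]$ are pairwise disjoint, and deduce $\Lambda([L])\ge 0$. The tool is the compensation $\Phi$ together with the observation that any collision $\Phi(L_1,C_1)=\Phi(L_2,C_2)=M$ with $L_1\ne L_2$ now has $C_1,C_2$ on disjoint supports $U_1,U_2$; installing both cycles on $U_1\cup U_2$ and copying $M$ elsewhere gives a fourth member $L_3\in[L]$, and with $p$ the common contribution outside $U_1\cup U_2$, $a_i=\lambda_{RS}(L_{C_i})>0$ and $c_i=\sigma(C_i)\lambda_{RS}(C_i)<0$, the quartet $M,L_1,L_2,L_3$ contributes exactly $p\,(a_1+c_1)(a_2+c_2)$, which is non-negative because $p>0$ (as $L_1$ is negative) and $a_i+c_i\ge 0$ by domination. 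Grouping the members of $[L]$ into such quartets, into dominated pairs $\{L',\Phi(L',C)\}$, and into leftover positive members then exhibits $\Lambda([L])$ as a sum of non-negative terms, so $\Lambda([L])\ge 0$, contradicting the hypothesis. Hence two non-disjoint n-cycles $C_1\subseteq L_1$, $C_2\subseteq L_2$ must exist, and by the structural fact of the previous paragraph their intersection consists of paths from $V_S$ to $V_R$.

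I expect the main obstacle to be the global bookkeeping in this final step, namely upgrading the local square identity $p\,(a_1+c_1)(a_2+c_2)\ge 0$ into a genuine partition of $\Lambda([L])$ into non-negative pieces. The difficulty is that p-cycles may run through several supports $U_k$ and through the complement, so $\Lambda([L])$ does not literally factor over the n-cycle supports; one must instead show that, given pairwise disjointness of all n-cycle supports, the compensation $\Phi$ can be organized --- for instance by always opening the lexicographically least n-cycle and collapsing each collision into the four-member square above --- so that no compensator is ever claimed by more demand than it dominates. Making this matching-with-squares scheme watertight is the substantive part; the reduction to a single class and the automatic path condition are comparatively routine.
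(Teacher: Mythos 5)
Your second paragraph --- the degree-counting argument showing that any maximal path in $C_1\cap C_2$ must begin at a species and end at a reaction, because compatibility forces every shared reaction to drag in its unique incoming rate edge and every shared species to drag in its unique outgoing rate edge --- is correct and is essentially the paper's entire proof: the paper runs the same two exclusions (a path ending in $V_S$ would need two substrate--reaction pairs for one species, contradicting compatibility; a path beginning in $V_R$ would need two rate edges into one reaction, destroying simplicity), and stops there. The paper treats the \emph{existence} of the two non-disjoint n-cycles as following from the informal discussion preceding the theorem (a negative class under domination forces two negative members to claim the same compensator), and never formalizes it.

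The genuine gap in your proposal is therefore exactly where you say it is, and it is not cosmetic. Your contradiction argument assumes all n-cycle supports are pairwise disjoint and wants to conclude $\Lambda([L])\ge 0$; the quartet identity $p\,(a_1+c_1)(a_2+c_2)\ge 0$ handles one isolated collision, but $\Lambda([L])$ does not decompose into such quartets plus dominated pairs plus positive leftovers: a single member of $[L]$ can simultaneously realize n-cycles on several supports $U_1,\dots,U_k$ and also carry p-cycles that straddle the complement, so the same member is claimed by several quartets at once, and the ``surplus positive terms'' you invoke are not disjoint from the pairs already used. What is actually needed is a global factorization over the disjoint supports (morally $\Lambda([L])=p\prod_k(a_k+c_k)$ summed over the ways the complement is covered), and proving that the complement's contribution factors out with a single sign requires an induction or an inclusion--exclusion over subsets of supports that you have not supplied. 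Since this existence step is the only part of the theorem that uses the hypotheses $\Lambda([L])<0$ and domination at all, the proposal as written proves only the same structural half that the paper's proof proves, and leaves the half that gives the theorem its content as an identified but unclosed obstacle.
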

\begin{proof}
An illustration for this proof is given in Fig.~\ref{fig:srintersect}. First note that the non-empty intersection of two cycles is always a collection of paths. Let $P$ be one of the paths in the intersection of $C_1,C_2$. If $P$ ends in $V_S$, there are two substrate-reaction pairs involving the same substrate species. Hence, $L_1,L_2$ cannot be compatible. If $P$ begins in $V_R$, there are two different species-reaction vertices leading into it, one from $C_1$, one from $C_2$. These edges are contained in all compatible sub-graphs leading to non-simple cycles. The sub-graphs are hence not species-line-graphs. The only remaining case is a path from $V_S$ to $V_R$ which yields both unique substrate-reaction pairs and (potentially) the same number of reactions and species.
\end{proof}

\begin{figure}[htbp]
\begin{center}
\includegraphics[width=\textwidth]{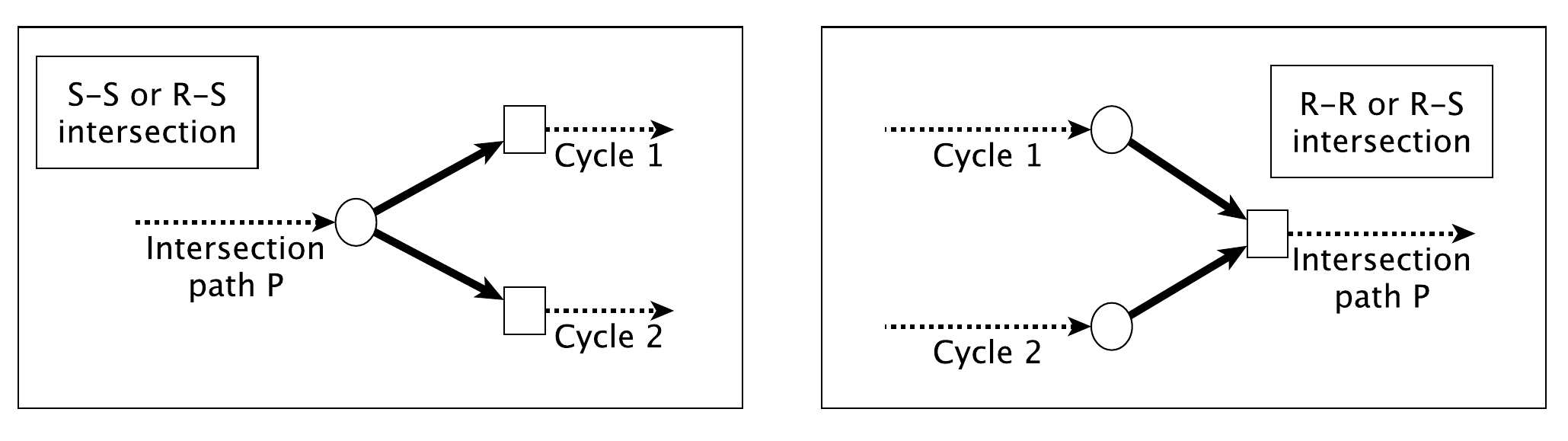}
\caption{Illustration of non-feasible intersection of cycles in compatible line-graphs. Left: Both possible intersections ending in $V_S$ require two substrate-reaction pairs with the same species and cannot occur in compatible sub-graphs. Right: An intersection from $V_R$ to $V_S$ uses a reaction vertex twice and leads to a zero overall contribution. Round (rectangular) nodes denote species (reaction) vertices. Dotted edges denote arbitrary paths through the graph, bold edges denote substrate-reaction pairs.}
\label{fig:srintersect}
\end{center}
\end{figure}

As a corollary of the theorem, we get a simple condition of the cycle-structure of a DSR-graph that allows to test if the determinant of the network does not vanishes anywhere. An equivalent condition was first formulated in~\cite{craciun2006}. 
\begin{corollary}[necessary condition for positive determinant]
\label{cor:neccond}
Consider a DSR-graph $G(\vec{-J})$ of an NAC network. The determinant $\det(\vec{-J})$ is positive if
\begin{enumerate}
\item there is at least one positive stoichiometric term
\item every cycle $C$ in $G$ is either a p-cycle or dominated by $L_C$
\item no two n-cycles have an intersection from $V_S$ to $V_R$
\end{enumerate}
\end{corollary}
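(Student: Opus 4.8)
The plan is to assemble the corollary directly from the machinery already developed, treating each of the three hypotheses as the input that rules out one failure mode of the determinant expansion from Theorem~\ref{thm:detsign}. Recall that Theorem~\ref{thm:detsign} writes $\det(-\vec{J}) = \sum_{[L]} \Lambda([L])\cdot\lambda_{SR}(L)$, where each rate label $\lambda_{SR}(L)$ is strictly positive for $\vec{x}>\vec{0}$. The whole determinant is therefore positive as soon as every stoichiometric term $\Lambda([L])$ is non-negative and at least one is strictly positive. Hypothesis~(1) supplies the strict positivity of one class, so the entire task reduces to showing that under hypotheses~(2) and~(3) no compatibility class can have a negative stoichiometric term.

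First I would fix an arbitrary compatibility class $[L]$ and show $\Lambda([L])\geq 0$, arguing by contradiction: suppose $\Lambda([L])<0$. The strategy is to invoke the \emph{species-reaction intersection} theorem, whose hypothesis is precisely that every n-cycle in every compatible species-line-graph is dominated by its compensating cycle $L_C$. Hypothesis~(2) of the corollary states that every cycle is either a p-cycle or dominated by $L_C$; since p-cycles contribute positively and need no compensation, this is exactly the domination hypothesis the theorem requires. The theorem then forces the existence of two compatible species-line-graphs $L_1,L_2\in[L]$ carrying non-disjoint n-cycles $C_1,C_2$ whose intersection consists entirely of paths running from $V_S$ to $V_R$. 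This is the link I would use to bring in the remaining hypothesis.

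The crux is then to derive a contradiction from hypothesis~(3). Two non-disjoint n-cycles sharing a path from $V_S$ to $V_R$ are forbidden outright by hypothesis~(3), so the conclusion of the intersection theorem is unsatisfiable, and the supposition $\Lambda([L])<0$ collapses. Hence every class has $\Lambda([L])\geq 0$, which together with hypothesis~(1) and the expansion of Theorem~\ref{thm:detsign} yields $\det(-\vec{J})>0$.

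The step I expect to be the main obstacle is making rigorous the passage from ``$\Lambda([L])<0$'' to ``some n-cycle is \emph{not} compensated by a domination argument within its own class,'' which is needed to apply the intersection theorem cleanly. The subtlety, flagged in the discussion preceding the theorem, is that a single compensating species-line-graph $L_C$ may simultaneously compensate several negative contributions arising from different n-cycles in different members of $[L]$; so pointwise domination (Proposition~\ref{prop:cancel}) of each n-cycle does not immediately sum to $\Lambda([L])\geq 0$. I would handle this by reading the intersection theorem in contrapositive form: it already packages exactly the combinatorial obstruction (two negative contributions collapsing onto one compensator via a shared $V_S$-to-$V_R$ path) that would let domination fail to aggregate, and hypothesis~(3) removes that obstruction. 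Thus the honest content of the proof is verifying that hypotheses~(2) and~(3) of the corollary match, respectively, the domination premise and the negated conclusion of the theorem, so that the theorem's contrapositive gives $\Lambda([L])\geq 0$ for every class.
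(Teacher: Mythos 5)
Your proposal is correct and follows exactly the route the paper intends: the corollary is obtained by reading the species--reaction intersection theorem in contrapositive (hypotheses~(2) and~(3) supplying, respectively, its domination premise and the negation of its conclusion) to get $\Lambda([L])\geq 0$ for every compatibility class, and then invoking the expansion of Theorem~\ref{thm:detsign} together with hypothesis~(1) for strict positivity. The subtlety you flag about a single compensating line-graph serving several n-cycles is precisely what the intersection theorem is designed to package, so your reading matches the paper's argument.
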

The existence of a positive term is guaranteed for open networks by Lemma~\ref{lem:Lout}.	

\section{Extensions}
A possible extensions to the idea of decomposing the determinant expansion using equivalence classes can be achieved if a partial order on the rate derivatives $\partial v_i/\partial x_j$ can be established. Since $\lambda_{SR}(L)$ is a product of rate derivatives, this order induces a partial order on some compatibility classes $L,L'$ such that $\lambda_{SR}(L)\geq\lambda_{SR}(L')$ for all $\vec{x}>\vec{0}$. This order of classes implies consequently implies
\[
  \Lambda([L]) > \Lambda([L']) \implies \Lambda([L])\cdot\lambda_{SR}(L) > \Lambda([L'])\cdot\lambda_{SR}(L')\;,
\]
allowing to compare two equivalence classes purely by their stoichiometric term. A negative contribution of $[L']$ might thus be compensated by a larger positive one of $[L]$, independently of $\vec{x}$. 

For mass-action kinetics, such a partial order is given for all species-line-graphs that use the same set of reaction vertices. First, note that the rate derivatives for mass-action rate laws are 
\[
  \frac{\partial v_i}{\partial x_j} (\vec{x}^*) = -N_{j,i}\cdot v_i(\vec{x}^*)\cdot \frac{1}{x_j^*}\;.
\] 
Consider now two species-line-graphs $L,L'$ in $G(\vec{-J})$ with $V_R(L)=V_R(L')$. Then,
\begin{align*}
  \lambda_{SR}(L) &= \prod_{(S_j,R_i)\in E(L)} -N_{j,i} \cdot v_i(\vec{x}^*)\cdot \frac{1}{x_j^*} \\
  \lambda_{SR}(L') &= \prod_{(S_j,R_i)\in E(L')} -N_{j,i} \cdot v_i(\vec{x}^*)\cdot \frac{1}{x_j^*}\;,
\end{align*}
and thus
\[
  \frac{\lambda_{SR}(L)}{\lambda_{SR}(L')} \;=\; \frac{\prod_{(S_j,R_i)\in E(L)} -N_{j,i}}{\prod_{(S_j,R_i)\in E(L')} -N_{j,i}}\text{ independently of }\vec{x}>\vec{0}\;.
\]
This allows to combine the results for all compatibility classes with the same reaction-vertices, a strategy employed in~\cite{craciun2006,mincheva2007}: Let $V_R^*$ be the specific set of $n$ reactions, and let $G^*\subseteq G(\vec{-J})$ be the DSR-graph with $V_R(G^*)=V_R^*$. Then, all species-line-graphs in $G^*$ give a combined non-negative contribution to the determinant if
\begin{equation}
\label{eq:fragment}
  \sum_{[L]\in\mathcal{L}(G^*)/\sim} \Lambda([L])\cdot\left(\prod_{(S_i,R_j)\in E(L)} -N_{i,j}\right) \geq 0\;.
\end{equation}
We remark that this overall term can also be computed similar to Lemma~\ref{lem:computeW} by replacing $\vec{W}_{L}$ by a suitable $n\times n$ matrix extracted from $\vec{N}^{-}$. A term~(\ref{eq:fragment}) is called a \emph{fragment} (of size $n$) in~\cite{mincheva2007} and a \emph{critical fragment}, if it is negative. In that publication, the question of the relation between critical fragments and conditions on Corollary~ \ref{cor:neccond} was raised. Since the products $\prod (-N_{i,j})$ are all positive, Corollary~\ref{cor:neccond} gives a sufficient condition to exclude a critical fragment, as it establishes non-negativity for each summand in~(\ref{eq:fragment}).

\section{Discussion}
The particular structure of dynamic chemical reaction network models allows to derive conditions to show or exclude various qualitative dynamic behavior. These conditions enable a first analysis and model selection independently of numerical values for rate constants and for all members of a large class of rate laws. 

In this paper, we brought forward a new definition of a bipartite species-reaction graph, termed DSR-graph. In contrast to previous definitions, all relevant features of cycles, such as feasible directions to traverse edges and substrate/product relations of species and reactions are directly encoded in the graph. Our DSR-graph contains previous definitions as special cases. We elucidated the direct connection of the DSR-graph to the systems' interaction graph and demonstrated how cycle features can be mapped by simple equivalences of edges and 2-paths. For calculating terms of the determinant expansions of the Jacobian matrix, both graphs yield structurally similar formulas, but the DSR-graph allows a more fine-grained analysis of the terms. As a new result for bipartite graphs of chemical reaction systems, we proposed a simple equivalence relation on the species-line-graphs of the DSR-graph that allows to collect comparable terms in the expansion and subsequently enabled simpler and more direct proofs of conditions for the non-vanishing of principal minors of the Jacobian matrix. We finally addressed a question raised in~\cite{mincheva2007} on the relation of their conditions to the ones developed by Craciun et al in~\cite{craciun2006}.

\subsection*{Acknowledgements}
We would like to thank Irene Otero Muras,  Markus Beat D{\"u}rr and J{\"o}rg Stelling for fruitful discussions. Financial support from the EU FP7 project UNICELLSYS is gratefully acknowledged.

\bibliographystyle{plain}
\bibliography{dsr-paper}
\end{document}